\DeclareMathOperator{\res}{Res}
\DeclareMathOperator{\cas}{cas}
\DeclareMathOperator{\Sm}{S}
\DeclareMathOperator{\dr}{d}
\newcommand{\shft}[2]{#1^{{}^{(#2)}}}
\newcommand{\wvopinf}{\shft{\hat{W}}{\infty}}
\newcommand{\wvopzero}{\shft{\hat{W}}{0}}
\newcommand{\bakerinf}{\shft{w}{\infty}}
\newcommand{\bakerzero}{\shft{w}{0}}
\newcommand{\hbakinf}{\shft{\hat{w}}{\infty}}
\newcommand{\hbakzero}{\shft{\hat{w}}{0}}
\newcommand{\invDtp}{\Dt^{-1}_+}
\newcommand{\invDtm}{\Dt^{-1}_-}
\newcommand{\bx}{\bar{x}}
\newcommand{\by}{\bar{y}}
\newcommand{\bB}{\bar{B}}
\newcommand{\bC}{\bar{C}}
\newcommand{\tld}{\tilde}
\newcommand{\tcas}{\widetilde{\cas}}
\newcommand{\bcas}{\overline{\cas}}
\newcommand{\mc}{\mathcal}
\newcommand{\mb}{\mathbb}
\newcommand{\om}{\omega}
\newcommand{\Om}{\Omega}
\newcommand{\sg}{\sigma}
\newcommand{\ta}{\tau}
\newcommand{\p}{\partial}
\newcommand{\ld}{\lambda}
\newcommand{\Ld}{\Lambda}
\newcommand{\af}{\alpha}
\newcommand{\Dt}{\Delta}
\newcommand{\suml}{\sum\limits}
\newtheorem{theorem}{Theorem}
\newtheorem{proposition}{Proposition}
\newtheorem{definition}{Definition}
\newtheorem{lemma}{Lemma}
\newtheorem{example}{Example}
\title{A new multi-component two dimensional Toda lattice hierarchy and two dimensional Toda lattice with
  self-consistent sources}
\author{Xiaojun Liu\footnote{lxj98@mails.tsinghua.edu.cn}\\
  Department of Applied Mathematics,\\
  China Agricultural University, Beijing, 100083, PRC \and
  Yunbo Zeng\footnote{yzeng@math.tsinghua.edu.cn} \; and Runliang Lin\footnote{rlin@math.tsinghua.edu.cn}\\
  Department of Mathematical Sciences,\\
  Tsinghua University, Beijing, 100084, PRC}
\begin{document}

\maketitle

\begin{abstract}
  We propose a new multi-component two-dimensional Toda lattice hierarchy (mc2dTLH) which includes
  two-dimensional Toda lattice equation with self-consistent sources (2dTLSCS) as the first non-trivial
  equation. The Lax representations for this mc2dTLH are presented. We
  construct a non-auto-B\"acklund Darboux transformation (DT) for 2dTLSCS by applying the method of variation
  of constant (MVC) to ordinary DT of 2dTLSCS. This non-auto-B\"acklund DT enables us to obtain various
  solutions such as solitons, rational solutions etc., to 2dTLSCS.
\end{abstract}

\section{Introduction}
Multi-component generalizations of soliton equations attract a lot of attention from both physical and
mathematical points of view
\cite{MR638807,MR723457,MR730247,MR688946,MR2006751,MR1621464,MR1629543,MR810623}. The multi-component KP
(mcKP) hierarchy given by \cite{MR638807} contains physically relevent nonlinear integrale systems such as
Davey-Stewartson equation, two-dimensional Toda lattice and three-wave resonant interaction equation. The
multi-component Toda lattice Hierarchy \cite{MR810623} contains non-abelian Toda lattice equation. There exist
several equivalent formulations of this multi-component soliton equations. For example, there are matrix
pseudo-differential operator (Sato) formulation, $\ta$-function approach via matrix Hirota bilinear
identities, multi-component free fermion formulation for mcKP hierarchy. For two dimensional Toda lattice
hierarchy (2dTLH), a similar matrix-difference operator approach to multi-component hierarchy was also
presented by \cite{MR810623}.

Another kind of multi-component generalizations to soliton equations
are the so-called soliton equation with self-consistent sources
(SESCS), which were initiated by V.K. Mel'nikov
\cite{MR708435,MR910584,MR940618}. For two dimensional Toda lattice
equation (2dTL), the corresponding 2dTL with self-consistent sources
(2dTLSCS) was first presented in \cite{MR2261273} by source
generating method as follows
\begin{subequations}
  \label{eqns:2dTLSCS}
  \begin{align}
    q_{xy}&=e^{q-\shft{q}{-1}}-e^{\shft{q}{1}-q}+\sum(w_i w_i^*)_y,\\
    w_{i,y} &=e^{q-\shft{q}{-1}} \shft{w_i}{-1}\quad (i=1,\ldots,N),\\
    w^*_{i,y}&=-e^{\shft{q}{1}-q}\shft{{w_i^*}}{1}.
  \end{align}
\end{subequations}

In \cite{nmcKP}, we proposed a method to construct a new multi-component KP hierarchy which includes two kinds
of KP equation with self-consistent sources presented by Mel'nikov \cite{MR708435}. In this paper, we will
present a new multi-component 2dTLH which includes 2dTLSCS as the first non-trivial equation.

We briefly recall the framework of \cite{MR810623} as follows. Let ${\bf x}=(x_1,x_2,\cdots)$ and ${\bf
  y}=(y_1,y_2,\cdots)$ be two series of variables. Let $n\in\mb{Z}$ be a discrete variable. Then Lax equation
of 2dTLH is given by
\begin{subequations}
  \label{eqns:2TodaHierInf}
  \begin{align}
    L_{x_m} & = [B_m, L],\\
    L_{y_m} & = [C_m, L],\\
    M_{x_m} & = [B_m, M],\\
    M_{y_m} & = [C_m, M],
  \end{align}
\end{subequations}
where
\begin{align*}
  L &= \Ld + u_0 + u_1 \Ld^{-1} + u_2 \Ld^{-2} + \cdots,\\
  M &= v_{-1}\Ld^{-1} + v_0 + v_1\Ld + v_2\Ld^2 + \cdots,
\end{align*}
$\Ld$ is a \emph{shift operator} such that $\Ld f(n)=f(n+1)\Ld$, $u_i$ and $v_i$ are functions of ${\bf x}$,
${\bf y}$ and $n$, $B_m=L^m_+$ stands for the positive part ($\ge0$) of $L^m$ with respect to the powers of
$\Ld$ and $C_m=M^m_-$ stands for the negative part ($<0$) of $M^m$. The commutativety of
(\ref{eqns:2TodaHierInf}) gives rise to zero-curvature equations of 2dTLH
\begin{subequations}
  \label{eqns:2TodaZerCur}
  \begin{align}
    B_{k,x_m} - B_{m,x_k} + [B_k, B_m ] = 0 , \label{eqn:2TodaBB}\\
    C_{k,y_m} - C_{m,y_k} + [C_k, C_m ] = 0 , \label{eqn:2TodaCC}\\
    B_{k,y_m} - C_{m,x_k} + [B_k, C_m ] = 0 . \label{eqn:2TodaBC}
  \end{align}
\end{subequations}
When $m=k=1$, (\ref{eqns:2TodaZerCur}) leads to following 2-dimensional Toda equation
\begin{subequations}
  \label{eqns:2TodaEqns_1stOrder}
  \begin{align}
    u_y &= v - \shft{v}{1},\\
    v_x &= v\left(u - \shft{u}{-1}\right),\label{eqn:2TodaEqns_1stOrder-uvrel}
  \end{align}
\end{subequations}
and Lax pair for \eqref{eqns:2TodaEqns_1stOrder} is
\begin{subequations}
  \label{eqns:2TodaLax}
  \begin{align}
    \psi_x&=B(\psi)=(\Ld+u)(\psi),\\
    \psi_y&=C(\psi)=(v\Ld^{-1})(\psi),
  \end{align}
\end{subequations}
where $x:=x_1$, $y:=y_1$, $B:=B_1$, $C:=C_1$, $u:=u_0$, $v:=v_{-1}$. Eliminating $u$ from
(\ref{eqns:2TodaEqns_1stOrder}) and introducing $q:=q(n,x,y)$ which satisfies
\begin{equation}
  \label{eqn:v-q}
  v := \exp \left( q- \shft{q}{-1}\right),
\end{equation}
then \eqref{eqns:2TodaEqns_1stOrder} gives the so-called \emph{two dimensional Toda lattice equation}:
\begin{equation}
  \label{eqn:2TodaEqn_2stOrder}
  q_{xy} = \exp\left(q-\shft{q}{-1}\right)-\exp\left(\shft{q}{1} - q\right).
\end{equation}

Our multi-component generalization to 2dTLH can be presented as follows. We first introduce a new vector field
$\p_{\by_k}$, which is a linear combination of all vector fields $\p_{y_m}$. Then we get a new Lax type
equation which consists of $\p_{\by_k}$-flow and evolutions of wave functions. Under the evolutions of wave
functions, the commutativety of $\p_{\by_k}$, $\p_{y_m}$ and $\p_{x_k}$ flow give rise to the new
multi-component 2dTL hierarchy (mc2dTLH). This hierarchy enables us to derive the 2dTLSCS in different way
from \cite{MR2261273,2-toda-scs-casorati} and to obtain their Lax representations. This hierarchy is also
different from mc2dTLH given by \cite{MR810623}.

In the second part of our paper, we solve 2dTLSCS by means of
Darboux transformations (DT). Since the Lax representation for
2dTLSCS is obtained, we can construct an auto-B\"acklund DT for
2dTLSCS, which transforms between the solution of 2dTLSCS with same
number of source terms. However, such auto-B\"acklund transformation
can not be used to construct non-trivial solution from the trivial
solution.  The idea for us to construct non-auto-B\"acklund DTs is
to consider 2dTLSCS as 2dTL with non-homogeneous terms (i.e.
self-consistent source terms). Inspired by ODE method, we can apply
the method of variation of constant (MVC) to DT to find a new
non-auto-B\"acklund DT which transforms original solution of 2dTLSCS
with $N$ self-consistent sources to a new solution of 2dTLSCS with
$(N+1)$ self-consistent sources. This non-auto-B\"acklund DT enables
us to find various solutions for 2dTLSCS. Furthermore, we obtained
the $m$-time repeated non-auto-B\"acklund DTs formula, and exhibit
some solutions of 2dTLSCS which include solitons, rational solutions
and etc.

Our paper will organized as follows. In section \ref{sec:mc2dTLH} we propose resolvent identities and present
new mc2dTLH which includes 2dTLSCS. In section \ref{sec:DT-2dTLSCS} we first construct auto-B\"acklund DT for
2dTLSCS. Then by applying method of variation of constant to this DT, we find a non-auto-B\"acklund DT which
can increase the number of source term by $1$. We obtain $m$-time repeated non-auto-B\"acklund DT formula
which can be expressed in compact form with Casoratian determinants. In section \ref{sec:solutions}, we
present some solutions to 2dTLSCS by using this $m$-time repeated DTs.

\section{New Multi-component 2-dimensional Toda Lattice Hierarchy}
\label{sec:mc2dTLH}

\subsection{Sato approach and resolvent identities}

First we introduce some useful notations and definitions which can be found in \cite{MR810623}.
\begin{definition}[The residue of shift operator]
  Let $P=\sum_{i\in\mb{Z}}P_i\Ld^i$, then residue of $P$ is
  \begin{displaymath}
    \res_\Ld P = P_0.
  \end{displaymath}
\end{definition}
\begin{definition}[The adjoint operator ${}^*$]
  $P^* = \sum_{i\in\mb{Z}} \Ld^{-i}P_i$.
\end{definition}
\begin{definition}[Shift operator's action]
  The shift operator action $P(\ld^n)$ can be defined as
  \begin{displaymath}
    P(\ld^n) = \sum_{i\in\mb{Z}}
    P_i\Ld^i(\ld^n)=(\sum_{i\in\mb{Z}}P_i\ld^i)\cdot \ld^n.
  \end{displaymath}
\end{definition}
\begin{definition}[Formal inversion of difference operator $\Dt$]
  For difference operator $\Dt=\Ld -1$ , the formal inversion are given by $$\Dt_+^{-1}=-\sum_{i\ge0}\Ld^i
  \quad \text{or}\quad\Dt_-^{-1}=\sum_{i\le-1}\Ld^i.$$
\end{definition}
Introduce \emph{wave operators}
\begin{align*}
  \wvopinf &= b_0 + b_1\Ld^{-1} + b_2\Ld^{-2} + \cdots,\\
  \wvopzero & = c_0 + c_1\Ld + c_2\Ld^2 + \cdots,
\end{align*}
where $b_0=1$. Ueno and Takasaki proved \cite{MR810623} that if $L$ and $M$ are solutions to
(\ref{eqns:2TodaHierInf}) then there exist wave operators, such that $L$ and $M$ can be written as
\begin{subequations}
  \begin{equation}
    \label{eqn:Sandwich}
    L = \wvopinf \Ld \wvopinf{}^{-1},\quad
    M = \wvopzero \Ld^{-1} \wvopzero{}^{-1},
  \end{equation}
  and
  \begin{alignat}{2}
    \p_{x_m}\wvopinf &=-L^m_{<0} \wvopinf,&\quad
    \p_{x_m}\wvopzero &=L^m_{\ge 0}\wvopzero,\\
    \p_{y_m}\wvopinf &= M^m_{<0}\wvopinf, &\quad \p_{y_m}\wvopzero &= -M^m_{\ge0} \wvopzero.
  \end{alignat}
\end{subequations}
Define \emph{wave function}
\begin{align*}
  \bakerinf &= \wvopinf(\ld^n)e^{\xi(x,\ld)}=(\sum_{i\ge0}b_i\ld^{-i})\ld^ne^{\xi(x,\ld)}
  := \hbakinf \ld^n e^{\xi(x,\ld)},\\
  \bakerzero &= \wvopzero(\ld^n)e^{\xi(y,\ld^{-1})}=(\sum_{i\ge0}c_i\ld^i)\ld^ne^{\xi(y,\ld^{-1})}
  := \hbakzero \ld^n e^{\xi(y,\ld^{-1})},
\end{align*}
and \emph{adjoint wave function}\\
\begin{align*}
  \bakerinf{}^* &=
  \wvopinf{{}^*}^{-1}(\ld^{-n})e^{-\xi(x,\ld)}:=\hbakinf{}^*\ld^{-n}e^{-\xi(x,\ld)},\\
  \bakerzero{}^* &= \wvopzero {{}^*}^{-1}(\ld^{-n}) e^{-\xi(y,\ld^{-1})} := \hbakzero{}^*\ld^{-n}
  e^{-\xi(y,\ld^{-1})},
\end{align*}
where $\xi(x,\ld)=\sum_{i\ge1}x_i\ld^i$,
$\xi(y,\ld^{-1})=\sum_{i\ge1}y_i\ld^{-i}$.  Then
(\ref{eqns:2TodaHierInf}) can also be given as the compatibility
condition of the following linear evolution equations
\begin{subequations}
  \begin{alignat}{4}
    L \bakerinf &= \ld \bakerinf,
    \quad\quad&M\bakerzero&=\ld^{-1}\bakerzero,\\
    \p_{x_m}\bakerinf &= B_m\bakerinf,
    \quad\quad&\p_{x_m}\bakerzero&= B_m\bakerzero,\\
    \p_{y_m}\bakerinf &= C_m\bakerinf, \quad\quad&\p_{y_m}\bakerzero&= C_m\bakerzero.
  \end{alignat}
\end{subequations}
\begin{lemma}[Ueno and Takasaki\cite{MR810623}]
  Suppose $P=\sum P_i\Ld^i$, $Q=\sum Q_j\Ld^j$, then
  \begin{displaymath}
    \res_\Ld P\cdot Q^* = \res_\ld\ld^{-1}P(\ld^n)\cdot Q(\ld^{-n}).
  \end{displaymath}
\end{lemma}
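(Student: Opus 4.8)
The plan is to verify the identity by a direct computation of both sides, reducing everything to the coefficient functions $P_i$, $Q_j$ and using only the definitions of the adjoint operator ${}^*$, of $\res_\Ld$ and $\res_\ld$, and of the shift operator's action recalled above.

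First I would expand the left-hand side. By the definition of the adjoint, $Q^* = \sum_j \Ld^{-j} Q_j$, so $P\cdot Q^* = \sum_{i,j} P_i \Ld^{i-j} Q_j$. Applying the fundamental rule $\Ld^k f(n) = f(n+k)\Ld^k$, each summand equals $P_i\, Q_j(n+i-j)\,\Ld^{i-j}$, which contributes to $\res_\Ld$ only when $i=j$; in that case $\Ld^{i-j}=\Ld^0$ is the identity and there is no shift, whence $\res_\Ld P\cdot Q^* = \sum_i P_i Q_i$.

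Next I would expand the right-hand side. By the definition of the shift operator's action, $P(\ld^n) = \big(\sum_i P_i\ld^i\big)\ld^n$, and, reading $\ld^{-n}$ as a function of $n$ so that $\Ld^j(\ld^{-n}) = \ld^{-n-j}$, also $Q(\ld^{-n}) = \big(\sum_j Q_j\ld^{-j}\big)\ld^{-n}$. Multiplying, the $\ld^{\pm n}$ factors cancel and $P(\ld^n)\cdot Q(\ld^{-n}) = \sum_{i,j} P_i Q_j \ld^{i-j}$, so that $\ld^{-1}P(\ld^n)\cdot Q(\ld^{-n}) = \sum_{i,j} P_i Q_j \ld^{i-j-1}$. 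Taking $\res_\ld$, i.e. the coefficient of $\ld^{-1}$, selects $i=j$ and yields $\res_\ld \ld^{-1}P(\ld^n)\cdot Q(\ld^{-n}) = \sum_i P_i Q_i$, which coincides with the value found for the left-hand side, completing the proof.

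There is no real obstacle here; the only point needing care is the bookkeeping of the shift $\Ld^{i-j}Q_j = Q_j(n+i-j)\Ld^{i-j}$ and the observation that this shift is trivial precisely on the diagonal $i=j$ that survives in the residue, so the $n$-dependence of the coefficients creates no complication. One should also note that the identity is meant formally, coefficient by coefficient: when $P$ and $Q$ have infinitely many terms the common value $\sum_i P_i Q_i$ is a formal (possibly infinite) sum, but in the applications to 2dTLH the operators involved are genuine finite difference operators and this sum is finite.
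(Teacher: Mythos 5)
Your proof is correct and proceeds exactly as the paper's: a direct coefficient computation of both sides using the definitions of ${}^*$, $\res_\Ld$, $\res_\ld$ and the shift action, with the residue selecting the diagonal $i=j$ on each side (the paper merely reduces to monomials $P=P_i\Ld^i$, $Q=Q_j\Ld^j$ first, while you keep the sums and note the shift bookkeeping $\Ld^{i-j}Q_j=Q_j(n+i-j)\Ld^{i-j}$ explicitly). No gaps.
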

\begin{proof}
  Only need to show for $P=P_i\Ld^i$, $Q=Q_j\Ld^j$. $\res_\Ld P Q^*=\res_\Ld P_i\Ld^{i-j}
  Q_j=\delta_{i,j}P_iQ_j$, while $\res_\ld \ld^{-1}P(\ld^n)Q(\ld^{-n})
  =\res_\ld\ld^{-1}P_i\ld^{n+i}Q_j\ld^{-n-j}=\delta_{i,j}P_iQ_j$.
\end{proof}
Similar to the KP theory, in which the principle part of resolvent can be expressed in terms of a quadratic
form of wave function and adjoint wave function \cite{MR1964513}, we have the following resolvent identities
for 2dTLH.
\begin{proposition}[Resolvent identities]
  \begin{alignat*}{4}
    \sum_{k\ge0}L^k_{\ge0}\ld^{-k} &=-\bakerinf\invDtp\bakerinf{}^*,\quad\quad
    &\sum_{k\in\mb{Z}}M^k_{\ge0}\ld^k&=-\bakerzero\invDtp
    \bakerzero{}^*,\\
    \sum_{k\in\mb{Z}}L^k_{<0}\ld^{-k} &=\bakerinf\invDtm\bakerinf{}^*,\quad\quad
    &\sum_{k>0}M^k_{<0}\ld^k&=\bakerzero\invDtm \bakerzero{}^*.
  \end{alignat*}
\end{proposition}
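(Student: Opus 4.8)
The plan is to prove the four identities by the same mechanism: expand the left-hand side as a generating series in $\ld$, recognize each coefficient as the residue of a product of a difference operator with an adjoint one, and then convert that residue into a $\ld$-residue (a quadratic expression in wave and adjoint wave functions) using the Ueno–Takasaki lemma just stated. I will carry out the first identity in detail and indicate that the other three are completely analogous (with $L\leftrightarrow M$, $\Ld\leftrightarrow\Ld^{-1}$, $x\leftrightarrow y$, and the appropriate sign/range changes in the formal inverse of $\Dt$).

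First I would recall the "sandwich" formula $L=\wvopinf\Ld\wvopinf{}^{-1}$, which gives $L^k=\wvopinf\Ld^k\wvopinf{}^{-1}$ for all $k\ge0$, so that $L^k(\bakerinf)=\ld^k\bakerinf$ and likewise $(L^k)^*$ acts on $\bakerinf{}^*$ by multiplication by $\ld^k$. The key computational step is to show that the $\Ld^j$-coefficient of $\sum_{k\ge0}L^k_{\ge0}\ld^{-k}$ can be written as a residue: for a purely-difference object $X=\wvopinf\Ld^k\wvopinf{}^{-1}$ one has, for each fixed power of $\Ld$, an identity of the form $\res_\Ld\!\left(X\cdot\Ld^{p}\,(\invDtp)^*\right)$ picking out exactly the nonnegative-$\Ld$-part contribution; summing the geometric-type series $\sum_{k\ge0}\ld^{-k}\Ld^{k}$ against $\invDtp=-\sum_{i\ge0}\Ld^i$ is what produces the operator $\invDtp$ sandwiched between the two wave operators. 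Concretely, I would write
\begin{displaymath}
  -\bakerinf\,\invDtp\,\bakerinf{}^*
  = \wvopinf\Bigl(\sum_{i\ge0}\Ld^i\Bigr)\wvopinf{}^{-1}\cdot(\text{multiplication factors }\ld^n,\ld^{-n})\,,
\end{displaymath}
and match this, term by term in $\ld$, against $\sum_{k\ge0}L^k_{\ge0}\ld^{-k}$ by using $L^k=\wvopinf\Ld^k\wvopinf{}^{-1}$ together with the Ueno–Takasaki lemma to pass between $\res_\Ld$ and $\res_\ld$. The exponential factors $e^{\pm\xi(x,\ld)}$ attached to $\bakerinf$ and $\bakerinf{}^*$ cancel in the product, so they play no role — only the operator parts $\hbakinf$, $\hbakinf{}^*$ and the monomials $\ld^{\pm n}$ matter.

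The main obstacle is bookkeeping rather than conceptual: one has to be careful about (i) which formal inverse $\invDtp$ versus $\invDtm$ matches which projection ($(\ )_{\ge0}$ versus $(\ )_{<0}$), since $\invDtp=-\sum_{i\ge0}\Ld^i$ and $\invDtm=\sum_{i\le-1}\Ld^i$ resum the same series $\sum_k\ld^{-k}\Ld^k$ on complementary ranges, and this is exactly what flips the overall sign between the $L_{\ge0}$ and $L_{<0}$ identities; and (ii) for the $M$-identities, the ranges of summation over $k$ ($k\in\mb Z$ in one case, $k>0$ in the other) come from the fact that $M$ begins at $\Ld^{-1}$, so $M^k$ has unbounded positive powers of $\Ld$ for $k<0$ and the "$(\ )_{\ge0}$" versus "$(\ )_{<0}$" truncations behave differently than for $L$. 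Once the correct pairing of formal inverses, projections, and summation ranges is fixed, each identity reduces to the single-term residue computation in the Ueno–Takasaki lemma, and the verification is a direct check. I would therefore present the $L_{\ge0}$ case fully and then say "the remaining three identities follow in the same way after the obvious substitutions," which is standard in this literature.
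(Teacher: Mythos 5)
Your proposal is correct and follows essentially the same route as the paper's own proof: write $L^k=\wvopinf\Ld^k\wvopinf{}^{-1}$, extract each coefficient of the projection $(\;)_{\ge0}$ or $(\;)_{<0}$ as a $\res_\Ld$ of a product with a shift, convert it to a $\res_\ld$ of $\bakerinf\cdot\shft{(\bakerinf{}^*)}{-m}$ via the Ueno--Takasaki lemma (the exponentials cancelling), and resum over the shift index to produce $\invDtp$ or $\invDtm$ and over $k$ against $\ld^{\mp k}$ to get the generating identity. The only caveat is that your displayed formula $-\bakerinf\invDtp\bakerinf{}^*=\wvopinf\bigl(\sum_{i\ge0}\Ld^i\bigr)\wvopinf{}^{-1}\cdot(\cdots)$ is not a literal operator identity and should be stated, as you indicate in the surrounding text, only as a term-by-term matching of $\ld$-residues.
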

\begin{proof}
  We prove one of them, others are similar. Since $L=\wvopinf\Ld\wvopinf{}^{-1}$, {\allowdisplaybreaks
    \begin{align*}
      L^k_{<0} &=\left(\wvopinf\Ld^k\wvopinf{}^{-1}\right)_{<0}\\
      &=\sum_{m\ge1}\res_\Ld
      \left(\wvopinf\Ld^k\wvopinf{}^{-1}\Ld^m\right)\Ld^{-m}\\
      &=\sum_{m\ge1}\res_\ld \ld^{-1}\left(\wvopinf\Ld^k(\ld^n)e^{\xi(x,\ld)}\right)
      \cdot\left(\Ld^{-m}\wvopinf{{}^{*}}^{-1}(\ld^{-n})e^{-\xi(x,\ld)}\right)
      \Ld^{-m}\\
      &=\sum_{m\ge1}\res_\ld \ld^{k-1}\bakerinf\Ld^{-m}\bakerinf{}^*\\
      &=\res_\ld\ld^{k-1}\bakerinf\invDtm\bakerinf{}^*.
    \end{align*}} So $\sum_{k\in\mb{Z}}L^k_{<0}\ld^{-k}=\bakerinf\invDtm\bakerinf{}^*$.
\end{proof}

\subsection{New mc2dTL hierarchy}

For fixed $k\ge1$, $N>0$, we define a new time variable $\by_k$ such that the corresponding vector field is
\begin{equation}
  \label{eqn:newy}
  \p_{\by_k}=\p_{y_k}+\sum_{i=1}^N\sum_{j>1}\ld_i^j\p_{y_j}
\end{equation}
where $\ld_i$ are distinct arbitrary non-zero parameters. Then the $\by_k$ flow is given by
\begin{equation}
  \label{eqns:new-evolution}
  \frac{\p}{\p_{\by_k}}L=[\bC_k,L],\quad\quad \frac{\p}{\p_{\by_k}}M=[\bC_k,M],
\end{equation}
with
\begin{displaymath}
  \bC_k=C_k+\sum_{i=1}^N\sum_{j\ge1}\ld_i^jC_j,
\end{displaymath}
which, according to Proposition 1, can be rewritten as
\begin{displaymath}
  \bC_k=C_k+\suml_{i=1}^N\bakerzero_i\invDtm {\bakerzero_i}^*.
\end{displaymath}
By setting $w_i=\bakerzero_i$, $w_i^*={\bakerzero_i}^*$, the compatibilities of (\ref{eqns:2TodaHierInf}) and
(\ref{eqns:new-evolution}) give rise to the following {\it new multi-component two dimensional Toda lattice
  hierarchy}.
\begin{proposition} We have the following new mc2dTLH,\\
  for $m\neq k$:
  \begin{subequations}
    \label{eqns:mc2dTLH-m-neq-k}
    \begin{align}
      &B_{m,x_k}-B_{k,x_m}+[B_m,B_k]=0\label{eqn:mc2dTLH-m-neq-k-1},\\
      &C_{m,\by_k}-\bC_{k,y_m}+[C_m,\bC_k]=0\label{eqn:mc2dTLH-m-neq-k-2},\\
      &B_{m,\by_k}-\bC_{k,x_m}+[B_m,\bC_k]=0\label{eqn:mc2dTLH-m-neq-k-3},\\
      &B_{k,y_m}-C_{m,x_k}+[B_k,C_m]=0\label{eqn:mc2dTLH-m-neq-k-4},\\
      &w_{i,x_m}=B_m(w_i)\quad\quad w_{i,y_m}=C_m(w_i) \quad
      (i=1,\ldots,N)\label{eqn:mc2dTLH-m-neq-k-wv},\\
      &w^*_{i,x_m}=-B_m^*(w_i^*),\quad\quad w^*_{i,y_m}=-C_m^*(w_i^*),\label{eqn:mc2dTLH-m-neq-k-adjwv}
    \end{align}
  \end{subequations}
  for $m=k$:
  \begin{subequations}
    \label{eqns:mc2dTLH-m-eq-k}
    \begin{align}
      &B_{k,\by_k}-\bC_{k,x_k}+[B_k,\bC_k]=0\label{eqn:mc2dTLH-m-eq-k-1}\\
      &\p_{x_k}w_i=B_k(w_i),\quad\p_{x_k}w_i^*=-B_k^*(w_i^*)\quad(i=1,\ldots,N)\label{eqn:mc2dTLH-m-eq-k-wv}
    \end{align}
  \end{subequations}
  where $\bC_k=C_k+\suml_{i=1}^Nw_i\invDtm w_i^*$.
\end{proposition}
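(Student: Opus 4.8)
The plan is to read off every relation in the statement from the already-established structure of the ordinary 2dTLH, using only bilinearity of the commutator and the resolvent identity of Proposition 1; no fresh computation of Toda flows is needed. First I would dispose of the relations carrying no new content: \eqref{eqn:mc2dTLH-m-neq-k-1} is exactly \eqref{eqn:2TodaBB} and \eqref{eqn:mc2dTLH-m-neq-k-4} is exactly \eqref{eqn:2TodaBC}, while \eqref{eqn:mc2dTLH-m-neq-k-wv}, \eqref{eqn:mc2dTLH-m-neq-k-adjwv} and \eqref{eqn:mc2dTLH-m-eq-k-wv} are the standard Sato evolutions $\p_{x_m}\bakerzero=B_m\bakerzero$, $\p_{y_m}\bakerzero=C_m\bakerzero$ together with their adjoint counterparts $\p_{x_m}\bakerzero{}^*=-B_m^*\bakerzero{}^*$, $\p_{y_m}\bakerzero{}^*=-C_m^*\bakerzero{}^*$ (obtained by adjoining and inverting the wave-operator relations $\p_{x_m}\wvopzero=L^m_{\ge0}\wvopzero$, $\p_{y_m}\wvopzero=-M^m_{\ge0}\wvopzero$ recalled above), all evaluated at $\ld=\ld_i$ and rewritten with $w_i:=\bakerzero_i$, $w_i^*:={\bakerzero_i}^*$; this evaluation is legitimate since $B_m$ and $C_m$ carry no $\ld$.

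The genuinely new relations are \eqref{eqn:mc2dTLH-m-neq-k-2}, \eqref{eqn:mc2dTLH-m-neq-k-3} and \eqref{eqn:mc2dTLH-m-eq-k-1}. The key point is that by \eqref{eqn:newy} the vector field $\p_{\by_k}$ is a fixed constant-coefficient linear combination of the pairwise commuting flows $\p_{y_j}$, and $\bC_k$ is the matching linear combination $C_k+\sum_{i}\sum_{j}\ld_i^jC_j$ of the $C_j$; hence any zero-curvature equation involving $\p_{\by_k}$ is obtained by applying the same combination to the known zero-curvature equations of the 2dTLH. Concretely, I would take \eqref{eqn:2TodaCC} in the form $C_{m,y_j}-C_{j,y_m}+[C_m,C_j]=0$, multiply by $\ld_i^j$, sum over $j$ and over $i=1,\dots,N$, add the $j=k$ instance, and use bilinearity to collapse $\sum_{i,j}\ld_i^j[C_m,C_j]=[C_m,\sum_{i,j}\ld_i^jC_j]$; the definitions of $\p_{\by_k}$ and $\bC_k$ then turn the result into \eqref{eqn:mc2dTLH-m-neq-k-2}. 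The identical combination applied to \eqref{eqn:2TodaBC} written as $B_{m,y_j}-C_{j,x_m}+[B_m,C_j]=0$ gives \eqref{eqn:mc2dTLH-m-neq-k-3}, whose $m=k$ case is \eqref{eqn:mc2dTLH-m-eq-k-1}. Finally, Proposition 1 — its identity $\sum_{j>0}M^j_{<0}\ld^j=\bakerzero\invDtm\bakerzero{}^*$ specialized to $\ld=\ld_i$ — shows $\sum_{i,j}\ld_i^jC_j=\sum_{i}\bakerzero_i\invDtm{\bakerzero_i}^*=\sum_iw_i\invDtm w_i^*$ as difference operators valid for all values of the times, so substituting this closed form puts $\bC_k$ into the claimed shape; as a consistency check one may differentiate the closed form directly using the wave-function equations of the first paragraph and recover the same linear combination of $C_{j,y_m}$'s and $C_{j,x_m}$'s. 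The compatibility of \eqref{eqns:2TodaHierInf} with \eqref{eqns:new-evolution} is then automatic, being nothing but the $\p_{\by_k}$-combination of flows that already commute; equivalently, the Jacobi identity turns $[\p_{x_m},\p_{\by_k}]L=0$ and $[\p_{y_m},\p_{\by_k}]L=0$ into precisely \eqref{eqn:mc2dTLH-m-neq-k-3} and \eqref{eqn:mc2dTLH-m-neq-k-2}.

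The step that will require the most care is the handling of the formal series $\sum_j\ld_i^jC_j$: a priori each coefficient of a power of $\Ld$ in it is an infinite sum, and its identification with the genuine difference operator $w_i\invDtm w_i^*$ — hence the legitimacy of differentiating it term by term, of evaluating the wave functions at $\ld=\ld_i$, and of plugging it back into a zero-curvature equation — rests entirely on the resolvent identity of Proposition 1. Once that identification is granted, what remains is pure bookkeeping: choosing the correct instances of \eqref{eqn:2TodaCC} and \eqref{eqn:2TodaBC} to combine, tracking the separated $j=k$ term and the degenerate $m=k$ case, and bilinearity of $[\,\cdot\,,\cdot\,]$.
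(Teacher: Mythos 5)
Your proposal is correct and follows essentially the route the paper intends: the paper offers no written-out proof beyond asserting that the compatibility of \eqref{eqns:2TodaHierInf} with \eqref{eqns:new-evolution} yields the system, and your $\ld_i^j$-weighted linear combination of the zero-curvature equations \eqref{eqns:2TodaZerCur}, closed up via the resolvent identity of Proposition 1 to identify $\sum_{i,j}\ld_i^jC_j$ with $\sum_i w_i\invDtm w_i^*$, is precisely the computation that assertion encodes. The only point worth flagging is the paper's own mismatch between the summation range $j>1$ in \eqref{eqn:newy} and $j\ge1$ in the definition of $\bC_k$ (evidently a typo for $j\ge1$), which your argument implicitly and correctly normalizes.
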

It is worth noting that $w_i$ and $w_i^*$ need not necessarily to be the wave function and adjoint wave
function. In fact, the equations (\ref{eqn:mc2dTLH-m-neq-k-wv}) and (\ref{eqn:mc2dTLH-m-neq-k-adjwv}) ( or
(\ref{eqn:mc2dTLH-m-eq-k-wv})) ensure the closeness of
(\ref{eqn:mc2dTLH-m-neq-k-1})-(\ref{eqn:mc2dTLH-m-neq-k-4}) (or (\ref{eqn:mc2dTLH-m-eq-k-1})). Furthermore,
under the conditions (\ref{eqn:mc2dTLH-m-neq-k-wv}) and (\ref{eqn:mc2dTLH-m-neq-k-adjwv}) (or
(\ref{eqn:mc2dTLH-m-eq-k-wv})), one can easily obtains the Lax representations of
(\ref{eqn:mc2dTLH-m-neq-k-1})-(\ref{eqn:mc2dTLH-m-neq-k-4}) as
\begin{subequations}
  \begin{alignat}{4}
    \psi_{x_m}&=B_m(\psi),\quad\quad \psi_{x_k}&=&B_k(\psi),\\
    \psi_{y_m}&=C_m(\psi),\quad\quad \psi_{\by_k}&=&\bC_k(\psi),
  \end{alignat}
\end{subequations}
or get the Lax representation of (\ref{eqn:mc2dTLH-m-eq-k-1}) as
\begin{equation}
  \psi_{x_k}=B_k(\psi),\quad\psi_{\by_k}=\bC_k(\psi).
\end{equation}

\begin{example}[Two dimensional Toda lattice equation with Self-consistent sources]
  When $m=k=1$ , let $u=u_0$, $v=v_{-1}$, $x=x_1$, $y=\by_1$
  \begin{displaymath}
    B_1=\Ld+u,\quad C_1=v\Ld^{-1}.
  \end{displaymath}
  then (\ref{eqns:mc2dTLH-m-eq-k}) becomes
  \begin{subequations}
    \label{eqns:2dTLSCS-1}
    \begin{align}
      &u_y=-\Dt(v+\sum_{i=1}^N w_iw_i^*\shft{{}}{-1}),\quad v_x=v(u-\shft{u}{-1}),\\
      &w_{i,x}=B_1(w_i),\quad w^*_{i,x}=-B_1^*(w_i^*), \quad i=1,\ldots,N
    \end{align}
  \end{subequations}
  Under $u=q_x$, $v=\exp(q-\shft{q}{-1})$, (\ref{eqns:2dTLSCS-1}) yields
  \begin{subequations}
    \label{eqns:2dTLSCS-q}
    \begin{align}
      q_{xy}&=e^{q-\shft{q}{-1}}-e^{\shft{q}{1}-q}+\sum_{i=1}^N(w_i w_i^*)_x,\\
      w_{i,x} &=\shft{w_i}{1}+q_xw_i,\quad (i=1,\ldots,N)\\
      w^*_{i,x}&=-w_i^*\shft{{}}{-1}-q_xw_i^*.
    \end{align}
  \end{subequations}
  This is two dimensional Toda lattice equation with $N$ self-consistent sources (2dTLSCS).
\end{example}

Analogously, let us introduce $\bx_k$, such that
\begin{displaymath}
  \p_{\bx_k}= \p_{x_k}+\sum_{i=1}^N\sum_{j\ge1}\ld_i^{-j}\p_{x_j},
\end{displaymath}
then we will get another new multi-component two dimensional Toda lattice hierarchy.
\begin{proposition} We have another new mc2dTLH as follows,\\
  for $m\neq k$
  \begin{subequations}
    \begin{align}
      &B_{m,\bx_k}-\bB_{k,x_m}+[B_m,\bB_k]=0,\\
      &C_{m,\bx_k}-\bB_{k,y_m}+[C_m,\bB_k]=0,\\
      &C_{m,y_k}-C_{k,y_m}+[C_m,C_k]=0,\\
      &B_{m,y_k}-C_{k,x_m}+[B_m,C_k]=0,\\
      &w_{i,y_m}=C_m(w_i),\quad w_{i,x_m}=B_m(w_i), \quad  i=1,\ldots,N,\\
      &w^*_{i,y_m}=-C_m^*(w_i^*),\quad w^*_{i,x_m}=-B_m^*(w_i^*).
    \end{align}
  \end{subequations}
  for $m=k$
  \begin{subequations}
    \label{eqns:Another-mc2dTLH-m-eq-k}
    \begin{align}
      &C_{k,\bx_k}-\bB_{k,y_k}+[C_k,\bB_k]=0,\\
      &\p_{y_k}w_i=C_k(w_i),\quad\p_{y_k}w_i^*=-C_k^*(w_i^*).\quad i=1,\ldots,N,
    \end{align}
  \end{subequations}
  where
  \begin{displaymath}
    \bB_k=B_k-\suml_{i=1}^Nw_i\invDtp w_i^*.
  \end{displaymath}
\end{proposition}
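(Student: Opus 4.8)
The plan is to mirror exactly the derivation of the first new mc2dTLH (Proposition 2), with the roles of the $x$-flows and $y$-flows interchanged, and the roles of $L$ and $M$ (equivalently, the roles of $\shft{\hat W}{\infty}$ and $\shft{\hat W}{0}$) interchanged. Concretely, I would start from the observation that $\p_{\bx_k}$, being a fixed linear combination $\p_{x_k}+\sum_{i=1}^N\sum_{j\ge1}\ld_i^{-j}\p_{x_j}$ of the commuting 2dTLH flows, acts on $L$ and $M$ as $\p_{\bx_k}L=[\bB_k,L]$, $\p_{\bx_k}M=[\bB_k,M]$ with $\bB_k=B_k+\sum_{i=1}^N\sum_{j\ge1}\ld_i^{-j}B_j$. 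The key simplification is then to recognize $\sum_{j\ge1}\ld_i^{-j}B_j=\sum_{j\ge1}\ld_i^{-j}L^j_{\ge0}$ as (minus) a value of the resolvent generating series from Proposition 1: since $\sum_{k\ge0}L^k_{\ge0}\ld^{-k}=-\bakerinf\invDtp\bakerinf{}^*$, evaluating the generating parameter at $\ld=\ld_i$ and stripping off the $k=0$ term $L^0_{\ge0}=1$ gives $\sum_{j\ge1}\ld_i^{-j}L^j_{\ge0}=-\bakerinf_i\invDtp{\bakerinf_i}^*-1$; absorbing the constants into a redefinition of the $w_i$ yields precisely $\bB_k=B_k-\sum_{i=1}^N w_i\invDtp w_i^*$ with $w_i=\bakerinf_i$, $w_i^*={\bakerinf_i}^*$.

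Next I would write down the compatibility conditions. Because $\p_{\bx_k}$ is a genuine flow built from the $\p_{x_j}$, the pair $(\p_{\bx_k},\p_{x_m})$ and the pair $(\p_{\bx_k},\p_{y_m})$ and the pair $(\p_{x_k},\p_{y_m})$ close up, and writing out $[\p_{\bx_k},\p_{x_m}]L=0$, $[\p_{\bx_k},\p_{y_m}]L=0$, $[\p_{x_k},\p_{y_m}]L=0$ together with the analogous identities for $M$ produces the four zero-curvature equations in the $m\neq k$ block: $B_{m,\bx_k}-\bB_{k,x_m}+[B_m,\bB_k]=0$, $C_{m,\bx_k}-\bB_{k,y_m}+[C_m,\bB_k]=0$, $C_{m,y_k}-C_{k,y_m}+[C_m,C_k]=0$, and $B_{m,y_k}-C_{k,x_m}+[B_m,C_k]=0$. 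For $m=k$ the single surviving mixed equation is $C_{k,\bx_k}-\bB_{k,y_k}+[C_k,\bB_k]=0$, obtained from $[\p_{y_k},\p_{\bx_k}]M=0$. In all cases the derivation is identical in form to the proof of Proposition 2, so I would simply indicate that one applies the same computation with $B\leftrightarrow C$, $x\leftrightarrow y$, $\invDtm\leftrightarrow\invDtp$, $\ld_i\leftrightarrow\ld_i^{-1}$.

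Finally, I would supply the auxiliary linear equations for $w_i,w_i^*$ that make the nonlocal operator $\bB_k$ well defined and the zero-curvature system closed. Taking $w_i=\bakerinf_i$ and $w_i^*={\bakerinf_i}^*$, the standard evolution equations for the wave and adjoint wave functions of 2dTLH read $\p_{x_m}\bakerinf=B_m\bakerinf$, $\p_{y_m}\bakerinf=C_m\bakerinf$, and dually $\p_{x_m}\bakerinf{}^*=-B_m^*\bakerinf{}^*$, $\p_{y_m}\bakerinf{}^*=-C_m^*\bakerinf{}^*$; restricting to the spectral values $\ld=\ld_i$ gives the stated relations $w_{i,x_m}=B_m(w_i)$, $w_{i,y_m}=C_m(w_i)$, $w^*_{i,x_m}=-B_m^*(w_i^*)$, $w^*_{i,y_m}=-C_m^*(w_i^*)$, and in the $m=k$ case the reduced pair $\p_{y_k}w_i=C_k(w_i)$, $\p_{y_k}w_i^*=-C_k^*(w_i^*)$. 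As in the remark following Proposition 2, one notes $w_i,w_i^*$ need not actually be wave functions: the listed evolution equations are exactly the integrability conditions guaranteeing the consistency of the nonlocal terms $w_i\invDtp w_i^*$ under all the flows.

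The main obstacle is bookkeeping rather than conceptual: one has to be careful that the resolvent identity used here is the ``positive'' one, $\sum_{k\ge0}L^k_{\ge0}\ld^{-k}=-\bakerinf\invDtp\bakerinf{}^*$, and that evaluating the generating series at $\ld=\ld_i$ with $|\ld_i|$ suitably chosen is legitimate as a formal manipulation, so that $\bB_k=B_k-\sum_i w_i\invDtp w_i^*$ comes out with the correct sign and with $\invDtp$ (not $\invDtm$) — this is the one place where the present proposition genuinely differs from Proposition 2 and where a sign error would propagate into every zero-curvature equation. Once the correct resolvent identity is fixed, the rest is a verbatim transcription of the earlier argument under the $x\leftrightarrow y$ symmetry.
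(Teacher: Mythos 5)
Your proposal is correct and follows exactly the route the paper intends: the paper gives no separate proof of this proposition, deriving it ``analogously'' to Proposition 2 by introducing $\p_{\bx_k}$, invoking the positive resolvent identity $\sum_{k\ge0}L^k_{\ge0}\ld^{-k}=-\bakerinf\invDtp\bakerinf{}^*$ to rewrite $\bB_k$, and reading off the zero-curvature equations from the commutativity of the flows. The only imprecision is your claim that the extra $k=0$ term (a constant $-N$) can be ``absorbed into a redefinition of the $w_i$'' --- an additive constant cannot be folded into the nonlocal product $w_i\invDtp w_i^*$; the correct observation is that a constant summand in $\bB_k$ has vanishing derivatives and trivial commutators, so it drops out of every equation of the hierarchy.
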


\begin{example}[2dTLSCS \cite{MR2261273,2-toda-scs-casorati}]
  When $m=k=1$, (\ref{eqns:Another-mc2dTLH-m-eq-k}) leads to (\ref{eqns:2dTLSCS}). It is interesting to see
  that (\ref{eqns:2dTLSCS}) is equivalent to (\ref{eqns:2dTLSCS-q}) under
  \begin{align*}
    x&\to -y,\quad y\to -x,\quad q\to q,\\
    w_i&\to -e^qw_i^*,\quad w_i^*\to e^{-q}w_i.
  \end{align*}
  So hereafter we may concentrate on 2dTLSCS (\ref{eqns:2dTLSCS-1}). This transformation was discovered by
  Prof. Hu Xingbiao.
\end{example}

\section{Darboux transformation for 2dTLSCS}
\label{sec:DT-2dTLSCS}

In the second part of our paper, we concentrate on 2dTLSCS~(\ref{eqns:2dTLSCS-1}). First recall the Lax pair
of 2dTL equation (\ref{eqns:2TodaLax}). For convenience, hereafter we denote~$B=B_1$, $C=C_1$, $u=u_0$,
$v=v_{-1}$, $x=x_1$, $y=y_1$.
% \begin{subequations}
%   \label{eqns:2TodaLax}
%   \begin{align}
%     \psi_x&=B(\psi)=(\Ld+u)(\psi),\\
%     \psi_y&=C(\psi)=(v\Ld^{-1})(\psi).
%   \end{align}
% \end{subequations}

\subsection{Applying the method of variation of constant to DT of 2dTLSCS }

Let us first introduce the notion of Casoratian determinant: for $m$ discrete variable functions
$h_1,\cdots,h_m$, the Casoratian determinant
\begin{displaymath}
  \cas(h_1,\cdots,h_m)=
  \begin{vmatrix}
    h_1 &\cdots & h_m\\
    \shft{h_1}{1} & \cdots & \shft{h_m}{1}\\
    \vdots & \vdots & \vdots\\
    \shft{h_1}{m-1} & \cdots & \shft{h_m}{m-1}
  \end{vmatrix}.
\end{displaymath}

Darboux transformation for 2dTL (\ref{eqns:2TodaEqns_1stOrder}) was given in \cite{bk:MatveevSalle91}. Let us
first recall this DT and its proof as following Lemma.
\begin{lemma}
  \label{thm:DT-toda}
  Let $h$ be special solution to (\ref{eqns:2TodaLax}). Let $\mc{D}=\Ld+\sg$,$\sg:=-\shft{h}{1}/h$, then DT
  \begin{subequations}
    \label{eqns:DT-toda}
    \begin{align}
      \tld{u}&:=\shft{u}{1}+\sg-\shft{\sg}{1},\label{eqn:DT-Toda-u}\\
      \tld{v}&:=v\sg/\shft{\sg}{-1},\label{eqn:DT-Toda-v}\\
      \tld{\psi}&:=\mc{D}(\psi)=\frac{\cas(h,\psi)}{h},\label{eqn:DT-Toda-psi}
    \end{align}
  \end{subequations}
  gives a new solution to (\ref{eqns:2TodaLax}). Thus $\tld{u}$, $\tld{v}$ are new solution for
  (\ref{eqns:2TodaEqns_1stOrder}).
\end{lemma}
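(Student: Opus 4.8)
The plan is to realise this transformation as an \emph{intertwining relation} for the first-order difference operator $\mc{D}=\Ld+\sg$ and to reduce the claim to two operator identities. First I would record the elementary facts that get used throughout: since $\sg=-\shft{h}{1}/h$ one has $\mc{D}(h)=\shft{h}{1}+\sg h=0$, and for an arbitrary $\psi$ the Casoratian expands as $\cas(h,\psi)=h\,\shft{\psi}{1}-\shft{h}{1}\,\psi$, so $\cas(h,\psi)/h=\shft{\psi}{1}+\sg\psi=\mc{D}(\psi)$, which justifies the two expressions for $\tld{\psi}$ in \eqref{eqn:DT-Toda-psi}. I would also note the telescoping relations $\shft{h}{2}/h=\sg\,\shft{\sg}{1}$ and $\shft{h}{-1}/h=-1/\shft{\sg}{-1}$, both immediate from the definition of $\sg$.

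Second, I would prove the two operator identities
\[
  \p_x\mc{D}+\mc{D}\,B=\tld{B}\,\mc{D},\qquad \p_y\mc{D}+\mc{D}\,C=\tld{C}\,\mc{D},
\]
where $\tld{B}=\Ld+\tld{u}$, $\tld{C}=\tld{v}\Ld^{-1}$, and $\p_x\mc{D}=\sg_x$, $\p_y\mc{D}=\sg_y$ denote coefficient-wise differentiation. Expanding each side as a Laurent series in $\Ld$ via $\Ld f=\shft{f}{1}\Ld$ and $\Ld^{-1}f=\shft{f}{-1}\Ld^{-1}$ and comparing coefficients: in the $x$-identity the left side is $\Ld^2+(\shft{u}{1}+\sg)\Ld+(\sg_x+\sg u)$ while the right side is $\Ld^2+(\shft{\sg}{1}+\tld{u})\Ld+\tld{u}\sg$, so the $\Ld^{1}$-coefficients force exactly $\tld{u}=\shft{u}{1}+\sg-\shft{\sg}{1}$, i.e. \eqref{eqn:DT-Toda-u}, and the $\Ld^{0}$-coefficients leave the scalar condition $\sg_x=(\tld{u}-u)\sg$; in the $y$-identity the left side is $\shft{v}{1}+\sg_y+\sg v\Ld^{-1}$ while the right side is $\tld{v}+\tld{v}\,\shft{\sg}{-1}\Ld^{-1}$, so the $\Ld^{-1}$-coefficients force $\tld{v}=v\sg/\shft{\sg}{-1}$, i.e. \eqref{eqn:DT-Toda-v}, and the $\Ld^{0}$-coefficients leave the scalar condition $\tld{v}=\shft{v}{1}+\sg_y$.

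Third, I would discharge these two scalar conditions using that $h$ solves \eqref{eqns:2TodaLax}, i.e. $h_x=\shft{h}{1}+uh$ and $h_y=v\,\shft{h}{-1}$. Differentiating $\sg=-\shft{h}{1}/h$ and substituting $\shft{(h_x)}{1}=\shft{h}{2}+\shft{u}{1}\shft{h}{1}$, $\shft{(h_y)}{1}=\shft{v}{1}h$, together with the telescoping relations above, a one-line computation yields $\sg_x=\sg(\shft{u}{1}-u+\sg-\shft{\sg}{1})=(\tld{u}-u)\sg$ and $\sg_y=-\shft{v}{1}+v\sg/\shft{\sg}{-1}$, that is $\sg_y+\shft{v}{1}=\tld{v}$. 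Hence both operator identities hold as genuine identities of difference operators.

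Finally, applying the first identity to any solution $\psi$ of \eqref{eqns:2TodaLax} gives $\tld{\psi}_x=\p_x(\mc{D}\psi)=(\p_x\mc{D})\psi+\mc{D}B\psi=\tld{B}\mc{D}\psi=\tld{B}(\tld{\psi})$, and likewise $\tld{\psi}_y=\tld{C}(\tld{\psi})$, so $\tld{\psi}$ solves \eqref{eqns:2TodaLax} with the new potentials $\tld{u},\tld{v}$. Because the construction applies simultaneously to the whole $\ld$-family of solutions of the original (compatible) system, the transformed pair has solutions for generic $\ld$ and is therefore itself compatible; its compatibility condition $\tld{B}_y-\tld{C}_x+[\tld{B},\tld{C}]=0$ is precisely \eqref{eqns:2TodaEqns_1stOrder} for $(\tld{u},\tld{v})$, which may alternatively be verified directly from \eqref{eqn:DT-Toda-u}--\eqref{eqn:DT-Toda-v}, the relations $\sg_x=(\tld{u}-u)\sg$, $\sg_y+\shft{v}{1}=\tld v$, and the 2dTL equations for $(u,v)$. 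The only places that demand care are the bookkeeping with the shift $\Ld^{\pm1}$ — in particular the backward shift appearing both in $\tld{v}$ and in the $\Ld^{-1}$-term of the $y$-identity — and making the final passage from ``$\tld{\psi}$ solves the Lax pair'' to ``$(\tld{u},\tld{v})$ solves 2dTL'' fully rigorous; neither is a serious obstacle, so I expect the proof to be short.
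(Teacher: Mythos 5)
Your proof is correct and follows essentially the same route as the paper: both reduce the lemma to the intertwining identities $\mc{D}_x+\mc{D}B=\tld{B}\mc{D}$ and $\mc{D}_y+\mc{D}C=\tld{C}\mc{D}$ and fix $\tld{u},\tld{v}$ by matching the $\Ld^{1}$ and $\Ld^{-1}$ coefficients. The only difference is cosmetic: where you discharge the residual $\Ld^{0}$ conditions by computing $\sg_x$ and $\sg_y$ directly, the paper notes that the residual operator is a scalar multiplication annihilating $h$ (obtained by differentiating $\mc{D}(h)=0$), hence vanishes.
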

\begin{proof}
  Since $\tld{B}:=\Ld+\tld{u}$, $\tld{C}:=\tld{v}\Ld^{-1}$, $\tld{\psi}=\mc{D}(\psi)$, a sufficient condition
  such that (\ref{eqns:2TodaLax}) holds is
  \begin{subequations}
    \label{eqns:DT-cond}
    \begin{align}
      &\mc{D}_x+\mc{D}B-\tld{B}\mc{D}=0,\label{eqn:DT-cond-xpart}\\
      &\mc{D}_y+\mc{D}C-\tld{C}\mc{D}=0.
    \end{align}
  \end{subequations}
  Notice that
  \begin{equation}
    \label{eqn:annihilate}
    \mc{D}(h)=0,
  \end{equation}
  take partial derivative $\p_x$,$\p_y$ to (\ref{eqn:annihilate}), one gets
  \begin{align*}
    &\mc{D}_x(h)+\mc{D}(h_x)=(\mc{D}_x+\mc{D}B)(h)=\tld{B}\mc{D}(h)=0,\\
    &\mc{D}_y(h)+\mc{D}(h_y)=(\mc{D}_y+\mc{D}C)(h)=\tld{C}\mc{D}(h)=0,
  \end{align*}
  which mean
  \begin{subequations}
    \label{eqns:DT-equiv-cond}
    \begin{align}
      &(\mc{D}_x+\mc{D}B-\tld{B}\mc{D})(h)=0,\\
      &(\mc{D}_y+\mc{D}C-\tld{C}\mc{D})(h)=0.
    \end{align}
  \end{subequations}
  From (\ref{eqn:DT-Toda-u}) and (\ref{eqn:DT-Toda-v}) one knows the operators acting on $h$ in
  (\ref{eqns:DT-equiv-cond}) are scalar functions multiplications. So (\ref{eqns:DT-cond}) holds.
\end{proof}

The Lax representation for 2dTLSCS (\ref{eqns:2dTLSCS-1}) is
\begin{subequations}
  \label{eqns:Lax-2TodaEqScs}
  \begin{align}
    \psi_x&=B(\psi),\label{eqn:Lax-2TodaEqScs-xpart}\\
    \psi_y&=(C+\sum_{i=1}^Nw_i\invDtm w_i^*)(\psi).\label{eqn:Lax-2TodaEqScs-ypart}
  \end{align}
  Note that Lax representation (\ref{eqns:Lax-2TodaEqScs}) holds under following equations
  \begin{align}
    w_{i,x}&=B(w_i),\quad i=1,\ldots,N,\label{eqn:Lax-2TodaEqScs-src}\\
    w^*_{i,x}&= -B^*(w_i).\label{eqn:Lax-2TodaEqScs-adjsrc}
  \end{align}
\end{subequations}
\begin{proposition}[Darboux transformation for 2dTLSCS (\ref{eqns:2dTLSCS-1})]
  \label{thm:DT-src-adjsrc}
  Let $h$ be a special solution to (\ref{eqns:Lax-2TodaEqScs}), $\mc{D}=\Ld+\sg$, $\sg:=-\shft{h}{1}/h$. Based
  on the Darboux transformation (\ref{eqns:DT-toda}), define
  \begin{subequations}
    \label{eqns:DT-src-adjsrc}
    \begin{align}
      \tld{w}_i&:=\mc{D}(w_i)=\frac{\cas(h,w_i)}{h},\label{eqn:DT-src}\\
      \tld{w}_i^*&:={\mc{D}^{*}}^{-1} (w_i^*)=-\frac{\Sm(h w_i^*)}{\shft{h}{1}},\label{eqn:DT-adj-src}
    \end{align}
  \end{subequations}
  where $\Sm:=\Ld\invDtm$. Then (\ref{eqns:DT-toda}) and (\ref{eqns:DT-src-adjsrc}) together give a new
  solution to (\ref{eqns:Lax-2TodaEqScs}). Thus one gets a new solution to (\ref{eqns:2dTLSCS-1}).
\end{proposition}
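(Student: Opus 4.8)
The plan is to verify that the pair $(\tld w_i,\tld w_i^*)$ defined by \eqref{eqns:DT-src-adjsrc} satisfies the transformed Lax equations \eqref{eqns:Lax-2TodaEqScs}, that is, that $\tld w_{i,x}=\tld B(\tld w_i)$, $\tld w_{i,x}^*=-\tld B^*(\tld w_i^*)$, and that the $y$-part $\tld\psi_y=(\tld C+\sum_i\tld w_i\invDtm\tld w_i^*)(\tld\psi)$ holds. The $x$-part for $\tld w_i$ is immediate from Lemma~\ref{thm:DT-toda}: the operator identity \eqref{eqn:DT-cond-xpart}, $\mc D_x+\mc DB-\tld B\mc D=0$, already proved there, applied to $w_i$ together with \eqref{eqn:Lax-2TodaEqScs-src} gives $\tld w_{i,x}=\p_x\mc D(w_i)=\tld B\mc D(w_i)=\tld B(\tld w_i)$. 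So the first genuine task is the adjoint source equation. Here I would take the adjoint of \eqref{eqn:DT-cond-xpart}, namely $-\mc D_x^*+B^*\mc D^*-\mc D^*\tld B^*=0$, rearrange it to an identity for ${\mc D^*}^{-1}$ (conjugating by ${\mc D^*}^{-1}$ on both sides gives $\p_x{\mc D^*}^{-1}+{\mc D^*}^{-1}B^* - \tld B^*{\mc D^*}^{-1} = 0$ after sign bookkeeping), and apply it to $w_i^*$, using \eqref{eqn:Lax-2TodaEqScs-adjsrc}; this yields $\tld w_{i,x}^*=-\tld B^*(\tld w_i^*)$. One subtlety to check carefully: ${\mc D^*}^{-1}$ is only a \emph{formal} inverse, and one must confirm that the explicit formula $\tld w_i^*=-\Sm(hw_i^*)/\shft h1$ with $\Sm=\Ld\invDtm$ is a genuine solution of $\mc D^*(\tld w_i^*)=w_i^*$, i.e. that $(\Ld^{-1}+\sg)$ applied to it returns $w_i^*$; this is a short direct computation using $\sg=-\shft h1/h$ and the defining property $\invDtm=\sum_{i\le-1}\Ld^i$ of the formal inverse, essentially a telescoping/Abel-summation identity.

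Next I would handle the $y$-part. The strategy is again to produce an operator identity at the level of $\mc D$. Writing the full $y$-operator of the source system as $\mc C:=C+\sum_i w_i\invDtm w_i^*$ and the transformed one as $\tld{\mc C}:=\tld C+\sum_i\tld w_i\invDtm\tld w_i^*$, I want to show
\begin{equation}
  \mc D_y+\mc D\,\mc C-\tld{\mc C}\,\mc D=0 .\label{eqn:DT-src-ycond}
\end{equation}
Granting \eqref{eqn:DT-src-ycond}, applying it to $\psi$ and using $\psi_y=\mc C(\psi)$, $\tld\psi=\mc D(\psi)$ gives $\tld\psi_y=\tld{\mc C}(\tld\psi)$ as desired, and then the compatibility of the $x$- and $y$-parts reproduces \eqref{eqns:2dTLSCS-1} for the tilded quantities. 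To prove \eqref{eqn:DT-src-ycond} I would mimic the proof of Lemma~\ref{thm:DT-toda}: from $\mc D(h)=0$ one gets, upon $\p_y$-differentiating and using $h_y=\mc C(h)$, that $(\mc D_y+\mc D\mc C-\tld{\mc C}\mc D)(h)=0$ provided $\tld{\mc C}$ is defined so that this operator annihilates $h$. Since $\mc D_y+\mc D\mc C$ is a difference operator of the form $\tld v\Ld^{-1}$ plus an integral term $\sum_i(\text{something})\invDtm(\text{something})$, and $\tld{\mc C}\mc D$ must match it, one reads off exactly the formulas \eqref{eqn:DT-Toda-v} for $\tld v$ and \eqref{eqns:DT-src-adjsrc} for $\tld w_i,\tld w_i^*$ as the unique choice making the annihilation hold; the point is that once an operator of this structure kills the scalar $h$ it must be identically zero (its ``difference'' part is a scalar multiple of $\Ld^{-1}$ acting trivially on nothing, and its integral part is forced).

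The main obstacle I anticipate is precisely the manipulation of the integral (pseudo-difference) term $w_i\invDtm w_i^*$ under conjugation by $\mc D=\Ld+\sg$. Computing $\mc D\,(w_i\invDtm w_i^*)$ and reorganizing it as $\tld w_i\invDtm\tld w_i^*$ times $\mc D$ plus lower-order pieces requires the ``integration-by-parts'' rule for $\invDtm$, i.e. identities of the shape $P\invDtm Q = (\text{local terms}) + (\text{image under }\invDtm\text{ of }P^\flat Q^\flat)$ analogous to the Leibniz rule for $\p^{-1}$ in the KP/pseudodifferential setting; one must also track how $\mc D^{-1}$ (needed to isolate $\invDtm$ on the right of $\tld{\mc C}\mc D$) interacts with $h$ and $\shft h1$. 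Getting the local terms to cancel against $\mc D_y+\mc DC-\tld C\mc D$ — which by Lemma~\ref{thm:DT-toda} is \emph{not} zero now, since $h$ solves the sourced equation \eqref{eqns:Lax-2TodaEqScs}, not the sourceless one — is the crux: the leftover of the sourceless DT identity must be exactly absorbed by the transformed source term. I would organize the bookkeeping by splitting $\mc C=C+\sum_i w_i\invDtm w_i^*$, noting $h_y=C(h)+\sum_i w_i\invDtm(w_i^*h)$, and carrying the extra $\sum_i w_i\invDtm(w_i^*h)$ through the Lemma's argument so that it reappears as the definition of $\tld w_i^*$ via $\Sm(hw_i^*)$; the appearance of $\Sm=\Ld\invDtm$ rather than $\invDtm$ itself is the signal that this conjugation has been done correctly.
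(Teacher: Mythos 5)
Your treatment of the $x$-part coincides with the paper's: $\tld{w}_i$ follows by applying the intertwining identity \eqref{eqn:DT-cond-xpart} of Lemma~\ref{thm:DT-toda} to $w_i$, and $\tld{w}_i^*$ by taking the formal adjoint of that identity and conjugating by ${\mc{D}^*}^{-1}$; your additional check that $\mc{D}^*(\tld{w}_i^*)=w_i^*$, which reduces to $\Dt\invDtm=1$ applied to $hw_i^*$, is correct and worth making explicit. The gap is in the $y$-part. The principle you invoke there --- that an operator of the form $f_0+f_{-1}\Ld^{-1}+\sum_i a_i\invDtm b_i$ which annihilates the single function $h$ ``must be identically zero'' --- is false: for any $a,b$ the nonzero operator $a\invDtm b-\frac{a\,\invDtm(bh)}{h}$ kills $h$. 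The annihilation argument of Lemma~\ref{thm:DT-toda} works only because, once the $\Ld^{-1}$ coefficient is removed by the choice \eqref{eqn:DT-Toda-v} of $\tld{v}$, what remains is a pure multiplication operator; the nonlocal blocks $\mc{D}w_i\invDtm w_i^*-\tld{w}_i\invDtm\tld{w}_i^*\mc{D}$ destroy exactly that feature, so $\mc{D}(h)=0$ cannot by itself force the operator identity $\mc{D}_y+\mc{D}\mc{C}-\tld{\mc{C}}\mc{D}=0$, nor does it ``read off'' the formulas \eqref{eqns:DT-src-adjsrc} as a unique choice.

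Consequently the computation you set aside as the ``main obstacle I anticipate'' is not a side issue but the entire content of the proof of the $y$-part, and the paper's proof consists precisely of carrying it out: one collects the source contributions to $\mc{D}_y=\sg_y$ coming from $h_y=C(h)+\sum_i w_i\invDtm(hw_i^*)$ and from $\shft{h}{1}_y$, adds the terms of $\mc{D}w_i\invDtm w_i^*-\tld{w}_i\invDtm\tld{w}_i^*\mc{D}$, and cancels everything term by term using $\Ld\invDtm=\invDtm\Ld$, $\Dt\invDtm=1$ and the explicit formulas \eqref{eqns:DT-src-adjsrc}. Your sketch of that computation has the right ingredients --- the extra terms in $\sg_y$ from the sourced evolution of $h$, and the appearance of $\Sm=\Ld\invDtm$ as the signature of the correctly conjugated source term --- but as written the proposal substitutes an invalid uniqueness principle for the one step that actually requires proof.
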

\begin{proof}
  From Lemma \ref{thm:DT-toda} it is easy to see that $\tld{w}_i$ defined by (\ref{eqn:DT-src}) satisfies
  (\ref{eqn:Lax-2TodaEqScs-src}). It is necessary to prove $\tld{w}_i^*$ satisfies
  (\ref{eqn:Lax-2TodaEqScs-adjsrc}). From the proof of Lemma \ref{thm:DT-toda} we know
  \begin{displaymath}
    (\p_x-\tld{B})\mc{D}=\mc{D}(\p_x-B).
  \end{displaymath}
  Taking formal adjoint ${}^*$ to this equality and rewrite it as
  \begin{displaymath}
    (-\p_x-\tld{B}^*){\mc{D}^*}^{-1}={\mc{D}^*}^{-1}(-\p_x-B^*).
  \end{displaymath}
  This is a sufficient condition for ${\mc{D}^*}^{-1}$ to be the Darboux transformation for
  (\ref{eqn:Lax-2TodaEqScs-adjsrc}). Thus we have proved (\ref{eqn:DT-adj-src}). At last we need to prove that
  Darboux transformation given by (\ref{eqns:DT-toda}) and (\ref{eqns:DT-src-adjsrc}) fulfills
  (\ref{eqn:Lax-2TodaEqScs-ypart}). That is
  \begin{equation}
    \label{eqn:DT-cond-2TodaEqScs-main}
    \mc{D}_y+\mc{D}C+\sum_{i=1}^N\mc{D} w_i\invDtm w_i^*
    -\tld{C}\mc{D}-\sum_{i=1}^N\tld{w}_i\invDtm\tld{w}_i^*\mc{D}=0.
  \end{equation}
  Based on (\ref{eqn:DT-cond-xpart}), we have to prove the extra terms w.r.t. $w_i$, $w_i^*$ in
  (\ref{eqn:DT-cond-2TodaEqScs-main}) are equal. For every $i$, we have {\allowdisplaybreaks
  \begin{align*}
    &-\frac{\shft{w_i}{1}\Sm(hw_i^*)}{h}+\frac{\shft{h}{1}}{h^2}w_i\invDtm(hw_i^*)
    +\shft{w_i}{1}\Ld\invDtm w_i^*\\
    &-\frac{\shft{h}{1}}{h}w_i\invDtm
    w_i^*-\tld{w}_i\invDtm\tld{w}_i^*\Ld+\tld{w}_i\invDtm\tld{w}_i^*\frac{\shft{h}{1}}{h}\\
    =&-\frac{\tld{w}_i}{h}\Sm(h
    w_i^*)-\frac{\shft{h}{1}}{h}w_iw_i^*+\shft{w_i}{1}\Ld\invDtm
    w_i^*\\
    &-\frac{\shft{h}{1}}{h}w_i\invDtm
    w_i^*+\tld{w}_i\invDtm\frac{\Sm(h w_i^*)}{\shft{h}{1}}\Ld-
    \tld{w}_i\invDtm\frac{\Sm(h w_i^*)}{h}\\
    =&-\frac{\tld{w}_i}{h}\Sm(h w_i^*)+\shft{w_i}{1}\Ld\invDtm
    w_i^*-\frac{\shft{h}{1}}{h}w_i\Ld\invDtm w_i^*\\
    &+\tld{w}_i\invDtm\Ld\frac{\Sm(h
      w_i^*)}{h}-\tld{w}_i\invDtm\frac{\Sm(hw_i^*)}{h}-\tld{w}_i\Ld\invDtm
    w_i^*=0.
  \end{align*}}
\end{proof}
\begin{theorem}[Darboux transformation and method of variation of constant for 2dTLSCS (\ref{eqns:2dTLSCS-1})]
  \label{thm:DT-src-adjsrc-new}
  Let $f$ and $g$ be two linear independent solutions to (\ref{eqns:Lax-2TodaEqScs}). Suppose $a(y)$ is
  arbitrary functions of time $y$. Let $h:=f+a(y)g$,
  \begin{subequations}
    \label{eqns:DT-new-src-adjsrc}
    \begin{align}
      \tld{w}_{N+1}&=c(y)\mc{D}(f),\label{eqn:DT-new-src}\\
      \tld{w}_{N+1}^*&=\frac{d(y)}{\shft{h}{1}},\label{eqn:DT-new-adjsrc}
    \end{align}
  \end{subequations}
  then (\ref{eqns:DT-toda}), (\ref{eqns:DT-src-adjsrc}) and (\ref{eqns:DT-new-src-adjsrc}) give a new solution
  for (\ref{eqns:2dTLSCS-1}) and (\ref{eqns:Lax-2TodaEqScs}) with $N+1$ self-consistent sources , where
  $c(y)$, $d(y)$ satisfy $c(y)d(y)=\p_y\log a(y)$.
\end{theorem}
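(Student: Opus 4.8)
The plan is to realize the new transformation as the difference operator $\mc D=\Ld+\sg$, $\sg=-\shft h1/h$, intertwining the $N$-source Lax pair \eqref{eqns:Lax-2TodaEqScs} with the $(N+1)$-source one; concretely, to check that $\tld\psi:=\mc D(\psi)$, together with $\tld u,\tld v$ from \eqref{eqns:DT-toda} (so that $\tld B=\Ld+\tld u$, $\tld C=\tld v\Ld^{-1}$) and the $\tld w_i,\tld w_i^*$, $i=1,\dots,N+1$, from \eqref{eqns:DT-src-adjsrc} and \eqref{eqns:DT-new-src-adjsrc}, satisfy
\begin{align*}
  &\tld\psi_x=\tld B(\tld\psi),\qquad
  \tld\psi_y=\Big(\tld C+\suml_{i=1}^{N+1}\tld w_i\invDtm\tld w_i^*\Big)(\tld\psi),\\
  &\tld w_{i,x}=\tld B(\tld w_i),\qquad \tld w^*_{i,x}=-\tld B^*(\tld w_i^*)\quad(i=1,\dots,N+1),
\end{align*}
whose compatibility is exactly 2dTLSCS \eqref{eqns:2dTLSCS-1} with $N+1$ sources. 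Two elementary observations organize everything. First, since $a=a(y)$ is $x$-independent, $h=f+a(y)g$ still obeys $h_x=B(h)$, but in $y$ it picks up a defect,
\begin{equation*}
  h_y=\Big(C+\suml_{i=1}^{N}w_i\invDtm w_i^*\Big)(h)+(\p_y a)\,g=:\mc C_N(h)+(\p_y a)\,g .
\end{equation*}
Second, $\mc D(h)=0$ forces $\mc D(f)=-a(y)\,\mc D(g)$.

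For the $x$-equations nothing is new: the proofs of Lemma~\ref{thm:DT-toda} and Proposition~\ref{thm:DT-src-adjsrc} used $h$ only through $h_x=B(h)$, so $\tld u,\tld v$ and $\tld w_i,\tld w_i^*$ for $i\le N$ satisfy the $x$-part as before. For the new pair, $f$ is a genuine solution of \eqref{eqns:Lax-2TodaEqScs}, so the $x$-intertwining $(\p_x-\tld B)\mc D=\mc D(\p_x-B)$ makes $\mc D(f)$ an $x$-eigenfunction of $\tld B$, and the $x$-independent factor $c(y)$ is harmless; hence $\tld w_{N+1,x}=\tld B(\tld w_{N+1})$. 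For $\tld w_{N+1}^*=d(y)/\shft h1$, a short computation using $h_x=B(h)$ and \eqref{eqn:DT-Toda-u} shows $\p_x(1/\shft h1)=-\tld B^*(1/\shft h1)$ (equivalently, $1/\shft h1\in\ker\mc D^*$ and one invokes the adjoint of the $x$-intertwining), so $\tld w^*_{N+1,x}=-\tld B^*(\tld w_{N+1}^*)$.

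The real content is the $y$-intertwining, i.e. the operator identity $\mc D_y+\mc D\,\mc C_N-\big(\tld C+\suml_{i=1}^{N+1}\tld w_i\invDtm\tld w_i^*\big)\mc D=0$. I would expand $\mc D_y=\sg_y=-\shft{h_y}1/h+\shft h1\,h_y/h^2$ and, using its linearity in $h_y$, split it according to the three summands $C(h)$, $\suml_{i=1}^N w_i\invDtm(w_i^*h)$, $(\p_y a)g$ of $h_y$. The part from $C(h)$ combined with $\mc D C-\tld C\mc D$ vanishes — this is a trivial algebraic identity, precisely how $\tld v$ was chosen — and the part from $\suml_{i=1}^N w_i\invDtm(w_i^*h)$ combined with $\suml_{i=1}^N(\mc D w_i\invDtm w_i^*-\tld w_i\invDtm\tld w_i^*\mc D)$ vanishes by the displayed computation in the proof of Proposition~\ref{thm:DT-src-adjsrc}; neither cancellation involved the defect. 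What remains is
\begin{equation*}
  \sg_y^{\mathrm{def}}-\tld w_{N+1}\invDtm\tld w_{N+1}^*\,\mc D,\qquad
  \sg_y^{\mathrm{def}}=-\frac{\p_y a}{h^2}\,\cas(h,g)=-\frac{\p_y a}{h}\,\mc D(g).
\end{equation*}
Using $\tfrac1{\shft h1}\mc D=\Dt\circ\tfrac1h$ (again immediate from $\sg=-\shft h1/h$) together with $\invDtm\Dt=1$, the second term collapses to the multiplication operator $d(y)\,\tld w_{N+1}/h$; and since $\mc D(f)=-a(y)\mc D(g)$ gives $\tld w_{N+1}=c(y)\mc D(f)=-c(y)a(y)\,\mc D(g)$, the remainder equals $\dfrac{\mc D(g)}{h}\big(c(y)d(y)a(y)-\p_y a\big)$, which vanishes exactly when $c(y)d(y)=\p_y\log a(y)$.

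Granting these two intertwining identities and the source $x$-equations, $\tld\psi=\mc D(\psi)$ solves the $(N+1)$-source Lax pair, and its compatibility produces the claimed solution of 2dTLSCS \eqref{eqns:2dTLSCS-1} with $N+1$ self-consistent sources. The one genuinely new step — the place where care is needed — is organizing the $y$-computation so that the defect $(\p_y a)g$ in $h_y$, which is exactly what makes this non-auto-B\"acklund, is absorbed cleanly by a single additional source pair; that it can be rests on the two structural facts that $1/\shft h1$ turns $\mc D$ into a total difference operator and that $\mc D(f)$ and $\mc D(g)$ are proportional, and it is these that force the constraint $c(y)d(y)=\p_y\log a(y)$ on the free functions.
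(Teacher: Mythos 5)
Your proof is correct and takes essentially the same route as the paper's: check the $x$-equations for the new pair directly (the paper does the same explicit computation showing $\p_x(d(y)/\shft{h}{1})=-\tld{B}^*(d(y)/\shft{h}{1})$), then isolate the defect in $\sg_y$ coming from $(\p_y a)g$ and cancel it against $\tld{w}_{N+1}\invDtm\tld{w}_{N+1}^*\mc{D}$, which collapses to a multiplication operator via $\invDtm\Dt=1$ and $\mc{D}(f)=-a(y)\mc{D}(g)$, forcing $c(y)d(y)=\p_y\log a(y)$. The only cosmetic difference is that you make the bookkeeping (splitting $\sg_y$ by linearity in $h_y$ and citing Lemma~\ref{thm:DT-toda} and Proposition~\ref{thm:DT-src-adjsrc} for the non-defect pieces) explicit where the paper leaves it implicit.
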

\begin{proof}
  It is easy to see $\tld{w}_{N+1}$ satisfies (\ref{eqn:Lax-2TodaEqScs-src}). To prove that $\tld{w}_{N+1}^*$
  satisfies (\ref{eqn:Lax-2TodaEqScs-adjsrc}), we have
  \begin{align*}
    \tld{w}_{N+1,x}^*&=-\frac{d(y)}{\shft{h}{1}{}^2}\shft{h}{1}_x =
    -\frac{d(y)}{\shft{h}{1}{}^2}(\shft{h}{2}+\shft{u}{1}\shft{h}{1}),\\
    -\tld{w}_{N+1}^*\shft{{}}{-1}-\tld{u}\tld{w}_{N+1}^*&=
    -\frac{d(y)}{h}-(\shft{u}{1}-\frac{\shft{h}{1}}{h}
    +\frac{\shft{h}{2}}{\shft{h}{1}})\frac{d(y)}{\shft{h}{1}}=
    -\frac{d(y)\shft{h}{2}}{\shft{h}{1}{}^2}-\shft{u}{1}\frac{d(y)}{\shft{h}{1}}.
  \end{align*}
  Based on proposition \ref{thm:DT-src-adjsrc}, we want to show that extra terms come out from
  \begin{displaymath}
    \mc{D}_y+\mc{D}\left(C+\sum_{i=1}^Nw_i\invDtm w_i^*\right)-
    \left(\tld{C}+\sum_{i=1}^{N+1}\tld{w_i}\invDtm\tld{w}_i^*\right)\mc{D}
  \end{displaymath}
  can be cancled out. It because
  \begin{align*}
    &-\frac{a_y\shft{g}{1}}{h}+\frac{\shft{h}{1}a_y
      g}{h^2}-\tld{w}_{N+1}\invDtm\tld{w}_{N+1}^*(\Ld-\shft{h}{1}/h)\\
    =&\frac{a_y}{h^2}\cas(g,h)-c(y)d(y)\frac{\cas(h,f)}{h}
    \invDtm(\shft{h}{1}{}^{-1}\Ld-\frac{1}{h})\\
    =&\frac{a_y}{h^2}\cas(g,f)-c(y)d(y)a(y)\frac{\cas(g,f)}{h^2}=0.
  \end{align*}
\end{proof}

\subsection{$m$-time repeated non-auto-B\"acklund DTs}
\label{subsec:m-DTs}

\begin{theorem}
  Let $f_j$ and $g_j$~($j=1,2,\ldots,m$) be $m$ pairs of independent solutions to
  (\ref{eqns:Lax-2TodaEqScs}). Suppose $a_j(y)$ are arbitrary functions of time. Let
  \begin{displaymath}
    h_j:=f_j+a_j(y)g_j.
  \end{displaymath}
  Then after $m$-time repetition of Theorem \ref{thm:DT-src-adjsrc-new}, we find a solution for
  (\ref{eqns:Lax-2TodaEqScs}) with $N+m$ self-consistent sources, which is
  \begin{subequations}
    \label{eqns:n-DT}
    \begin{align}
      u[m]&=\shft{u}{m}
      +\frac{\shft{\tcas}{1}(h_1,\cdots,h_m)}{\shft{\cas}{1}(h_1,\cdots,h_m)}-
      \frac{\tcas(h_1,\cdots,h_m)}{\cas(h_1,\cdots,h_m)},\label{eqn:n-DT-u}\\
      v[m]&=v\frac{\shft{\cas}{1}(h_1,\cdots,h_m)\shft{\cas}{-1}(h_1,\cdots,h_m)}
      {\cas^2(h_1,\cdots,h_m)},\label{eqn:n-DT-v}\\
      w_i[m]&=\frac{\cas(h_1,\cdots,h_m,w_i)}{\cas(h_1,\cdots,h_m)},\label{eqn:n-DT-src}\quad
      i=1,\ldots,N,\\
      w_i^*[m]&=(-1)^m\frac{\bcas(h_1,\cdots,h_m,w_i^*)}{\shft{\cas}{1}(h_1,\cdots,h_m)},
      \label{eqn:n-DT-adjsrc}\\
      w_{N+j}[m]&=c_j(y)f_j[m]=c_j(y)\frac{\cas(h_1,\cdots,h_m,f_j)}{\cas(h_1,\cdots,h_m)},\quad
      j=1,\ldots,m
      \label{eqn:n-DT-src-new}\\
      w_{N+j}^*[m]&=(-1)^{m-j}d_j(y)\frac{\shft{\cas}{1}(h_1,\cdots,\hat{h}_j,\cdots,h_m)}
      {\shft{\cas}{1}(h_1\cdots,h_m)},\label{eqn:n-DT-adjsrc-new}
    \end{align}
  \end{subequations}
  where
  \begin{displaymath}
    \tcas(h_1,\cdots,h_m)=
    \begin{vmatrix}
      h_1 & \cdots & h_m\\
      \vdots & & \vdots\\
      \shft{h_1}{m-2} & \cdots & \shft{h_m}{m-2}\\
      \shft{h_1}{m} & \cdots &\shft{h_m}{m}
    \end{vmatrix},
    \bcas(h_1,\cdots,h_m,f)=
      \begin{vmatrix}
        \Sm(h_1f) & \cdots & \Sm(h_mf)\\
        \shft{h_1}{1} &\cdots &\shft{h_m}{1}\\
        \vdots & & \vdots\\
        \shft{h_1}{m-1} & \cdots & \shft{h_m}{m-1}
      \end{vmatrix}
  \end{displaymath}
  and $c_j(y)d_j(y)=\p_y\log a_j(y)$.
\end{theorem}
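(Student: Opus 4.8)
The plan is to prove the theorem by induction on the number $m$ of Darboux transformations, each inductive step being one application of Theorem~\ref{thm:DT-src-adjsrc-new}. I will assume throughout the genericity hypothesis $\cas(h_1,\ldots,h_k)\neq 0$ for all $k\le m$ (so all intermediate Darboux matrices are well defined) and enough decay in $n$ for the summation operator $\Sm=\Ld\invDtm$ to make sense. The base case $m=1$ is exactly Theorem~\ref{thm:DT-src-adjsrc-new}, once one notes that for $m=1$ one has $\cas(h_1)=h_1$, $\tcas(h_1)=\shft{h_1}{1}$, $\bcas(h_1,f)=\Sm(h_1f)$, and the Casoratian of the empty list is $1$, so that \eqref{eqns:n-DT} collapses to \eqref{eqns:DT-toda}, \eqref{eqns:DT-src-adjsrc} and \eqref{eqns:DT-new-src-adjsrc}.

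For the inductive step, assume \eqref{eqns:n-DT} after $m-1$ transformations, with composite operator $\mc{D}[m-1]$ built from $h_1,\ldots,h_{m-1}$, for which the induction hypothesis gives $\mc{D}[m-1](\psi)=\cas(h_1,\ldots,h_{m-1},\psi)/\cas(h_1,\ldots,h_{m-1})$. I apply Theorem~\ref{thm:DT-src-adjsrc-new} once more to the $(m-1)$-times transformed system, with the solutions $\mc{D}[m-1](f_m)$, $\mc{D}[m-1](g_m)$ and the function $a_m(y)$; by linearity the seed of the new elementary transformation is
\begin{displaymath}
  h_m^{[m-1]}:=\mc{D}[m-1](f_m)+a_m(y)\mc{D}[m-1](g_m)=\mc{D}[m-1](h_m)=\frac{\cas(h_1,\ldots,h_m)}{\cas(h_1,\ldots,h_{m-1})},
\end{displaymath}
and $\mc{D}_m:=\Ld+\sg_m$ with $\sg_m:=-\shft{(h_m^{[m-1]})}{1}/h_m^{[m-1]}$; put $\mc{D}[m]:=\mc{D}_m\mc{D}[m-1]$. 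This composite is a monic order-$m$ difference operator annihilating $h_1,\ldots,h_m$ (it kills $h_1,\ldots,h_{m-1}$ because $\mc{D}[m-1]$ does, and $h_m$ because $\mc{D}_m$ kills $h_m^{[m-1]}=\mc{D}[m-1](h_m)$), so its coefficients are the unique solution of the $m\times m$ linear system expressing $\mc{D}[m](h_j)=0$; Cramer's rule (legitimate since $\cas(h_1,\ldots,h_m)\neq0$) then identifies $\mc{D}[m]$ with the Crum operator $\psi\mapsto\cas(h_1,\ldots,h_m,\psi)/\cas(h_1,\ldots,h_m)$. Hence $w_i[m]=\mc{D}_m(w_i[m-1])=\mc{D}[m](w_i)$ gives \eqref{eqn:n-DT-src}, $w_{N+j}[m]=c_j(y)\mc{D}_m(f_j[m-1])=c_j(y)\mc{D}[m](f_j)$ gives \eqref{eqn:n-DT-src-new} (with the $j=m$ case being the source just created), and the new adjoint source $w_{N+m}^*[m]=d_m(y)/\shft{(h_m^{[m-1]})}{1}=d_m(y)\,\shft{\cas}{1}(h_1,\ldots,h_{m-1})/\shft{\cas}{1}(h_1,\ldots,h_m)$ is the $j=m$ case of \eqref{eqn:n-DT-adjsrc-new}.

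For the background fields: expanding $\cas(h_1,\ldots,h_m,\psi)$ along its last column shows $\mc{D}[m]=\Ld^m-\dfrac{\tcas(h_1,\ldots,h_m)}{\cas(h_1,\ldots,h_m)}\Ld^{m-1}+\cdots$, and comparing the coefficients of $\Ld^m$ on the two sides of the intertwining identity $(\Ld+u[m])\mc{D}[m]=\mc{D}[m](\Ld+u)+\p_x\mc{D}[m]$ (which holds because each elementary transformation, hence $\mc{D}[m]$, conjugates $\Ld+u$ correctly) yields \eqref{eqn:n-DT-u}. For \eqref{eqn:n-DT-v}, the rule $\tld v=v\,\sg/\shft{\sg}{-1}$ of Lemma~\ref{thm:DT-toda} is inherited unchanged by Proposition~\ref{thm:DT-src-adjsrc} and Theorem~\ref{thm:DT-src-adjsrc-new}, so $v[k]=v[k-1]\,\sg_k/\shft{\sg_k}{-1}$ at every step; inserting $h_k^{[k-1]}=\cas(h_1,\ldots,h_k)/\cas(h_1,\ldots,h_{k-1})$ into $\sg_k$ and multiplying over $k=1,\ldots,m$, the product telescopes to \eqref{eqn:n-DT-v}.

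The main obstacle is the adjoint data $w_i^*[m]$ and $w_{N+j}^*[m]$ for $j\le m-1$. Using $w_i^*[m]=(\mc{D}_m^*)^{-1}(w_i^*[m-1])$, the formula $(\mc{D}_m^*)^{-1}(\phi)=-\Sm(h_m^{[m-1]}\phi)/\shft{(h_m^{[m-1]})}{1}$ from Proposition~\ref{thm:DT-src-adjsrc}, the induction hypothesis \eqref{eqn:n-DT-adjsrc} and $h_m^{[m-1]}=\cas(h_1,\ldots,h_m)/\cas(h_1,\ldots,h_{m-1})$, the claim \eqref{eqn:n-DT-adjsrc} at level $m$ reduces to the identity
\begin{displaymath}
  \shft{\cas}{1}(h_1,\ldots,h_{m-1})\,\Sm\!\left(\frac{\cas(h_1,\ldots,h_m)\,\bcas(h_1,\ldots,h_{m-1},w_i^*)}{\cas(h_1,\ldots,h_{m-1})\,\shft{\cas}{1}(h_1,\ldots,h_{m-1})}\right)=\bcas(h_1,\ldots,h_m,w_i^*),
\end{displaymath}
and, via \eqref{eqn:n-DT-adjsrc-new}, to an analogous identity for each $w_{N+j}^*[m]$. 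Since $\Dt\,\Sm=\Ld$, applying $\Dt=\Ld-1$ converts every such relation into a purely algebraic Casoratian identity --- a Desnanot--Jacobi / Pl\"ucker relation for the bordered Casorati matrix defining $\bcas$, with the running sums $\Sm(h_j w_i^*)$ as the border --- after which one only checks that both sides vanish as $n\to-\infty$ to remove the additive constant left free by $\Dt^{-1}$, which is automatic since $\Sm=\sum_{j\ge0}\Ld^{-j}$. I expect the delicate points to be precisely these determinant identities and the attendant sign bookkeeping in $\bcas$ and in the factors $(-1)^m$, $(-1)^{m-j}$; the only remaining ingredient, the relations $c_j(y)d_j(y)=\p_y\log a_j(y)$, is merely the hypothesis of Theorem~\ref{thm:DT-src-adjsrc-new} imposed at the $j$-th step.
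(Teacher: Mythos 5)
Your skeleton coincides with the paper's own proof: induction on $m$, identification of the composite operator $\mc{D}(m)$ with the Casoratian (Crum) operator via Cramer's rule applied to the annihilation conditions $\mc{D}(m)h_j=0$, extraction of $u[m]$ from the $\Ld^m$ coefficient of the $x$-part intertwining relation, and Laplace expansion for the eigenfunctions $w_i[m]$ and $w_{N+j}[m]$. (For $v[m]$ the paper reads off the $\Ld^{-1}$ coefficient of the $y$-part intertwining relation instead of telescoping the one-step rule, but your telescoping computation is correct.) The parts of your argument that are actually carried out are sound.

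The genuine gap is the part you defer: the adjoint formulas \eqref{eqn:n-DT-adjsrc} and \eqref{eqn:n-DT-adjsrc-new} require evaluating the non-local operator $\Sm=\Ld\invDtm$ on a ratio of Casoratians, and this is the real content of the theorem, not ``sign bookkeeping.'' Your proposed mechanism --- apply $\Dt$ to the asserted identity and invoke a Desnanot--Jacobi relation --- does not go through as stated: $\Dt$ does not pass through the prefactor $\shft{\cas}{1}(h_1,\ldots,h_{m-1})$, and even after dividing it out, $\shft{\bcas}{1}$ is not a minor of the same bordered matrix, since its first row is $\Sm(h_jw_i^*)+\shft{(h_jw_i^*)}{1}$; one is left with a Sylvester-type bordered-determinant identity in the indeterminates $\Sm(h_jw_i^*)$ that still has to be identified and proved, and the boundary matching at $n\to-\infty$ needs control on the growth of the Casoratians, which is not automatic for the solution classes of Section~\ref{sec:solutions}. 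The paper avoids all of this with two devices you do not have. First, a summation-by-parts step writes the argument of $\Sm$ as an exact difference plus a term already of the form $\Sm(h\,w^*)$,
\begin{displaymath}
  h_{m+1}[m]\,w^*[m]=\Dt\Bigl(\frac{h_{m+1}[m-1]}{h_m[m-1]}\,\Sm\bigl(h_m[m-1]w^*[m-1]\bigr)\Bigr)-\shft{h_{m+1}[m-1]}{1}\,\shft{w^*[m-1]}{1},
\end{displaymath}
so that $\Sm\Dt=\Ld$ evaluates the first term in closed form and the induction hypothesis (applied to the two families $\{h_1,\ldots,h_m\}$ and $\{h_1,\ldots,h_{m-1},h_{m+1}\}$) handles the second; the resulting two-term combination is then collapsed into $\bcas(h_1,\ldots,h_{m+1},w^*)$ by an explicit determinant identity displayed in the paper. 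Second, for $w_{N+j}^*[m]$ with $j<m$ the paper uses permutability of the elementary transformations to reorder them so that $h_j$ is applied last, giving $w_{N+j}^*[m]=d_j(y)/\shft{h_j[m-1]}{1}$ outright and reducing the next step again to $\Sm\Dt=\Ld$; your route would instead force yet another unproved evaluation of $\Sm$ on a Casoratian ratio. Until these identities are supplied, the induction does not close.
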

\begin{proof}
  Since each time Darboux transformation has the form $\mc{D}=\Ld+\sg$, after $m$-time repetition,
  corresponding operator has the form
  \begin{displaymath}
    \mc{D}(m)=\Ld^m+\sg_{m-1}\Ld^{m-1}+\cdots+\sg_0.
  \end{displaymath}
  There are $m$ indetermined coefficients $\sg_i$, $i=0,\ldots,m-1$.
  From (\ref{eqn:annihilate}) we know
  \begin{displaymath}
    \mc{D}(m)h_i=0,\quad i=1,2,\ldots,m.
  \end{displaymath}
  So the indertermined coefficients satisfies
  \begin{displaymath}
    \left[
      \begin{matrix}
        h_1 & \shft{h_1}{1} & \cdots & \shft{h_1}{m-1}\\
        h_2 & \shft{h_2}{1} & \cdots & \shft{h_2}{m-1}\\
        \vdots & \vdots & \vdots & \vdots\\
        h_m & \shft{h_m}{1} & \cdots & \shft{h_m}{m-1}
      \end{matrix}
    \right]\left[
      \begin{matrix}
        \sg_0\\
        \sg_1\\
        \vdots\\
        \sg_{m-1}
      \end{matrix}\right]
    =-\left[
      \begin{matrix}
        \shft{h_1}{m}\\
        \shft{h_2}{m}\\
        \vdots\\
        \shft{h_m}{m}
      \end{matrix}
    \right].
  \end{displaymath}
  By Cramer rule, it is easy to see
  \begin{displaymath}
    \sg_0=(-1)^m\frac{\shft{\cas}{1}(h_1,\cdots,h_m)}{\cas(h_1,\cdots,h_m)},\quad
    \sg_{m-1}=-\frac{\tcas(h_1,\cdots,h_m)}{\cas(h_1,\cdots,h_m)}.
  \end{displaymath}
  Note that DT of $u$, $v$ are the same for 2dTL and 2dTLSCS. So we may omit source term temporary. That is,
  assuming that $\mc{D}(m)$ transforms $\psi$ to $\tld{\psi}=\psi[m]$, and satisfying
  $\tld{\psi}_x=\tld{B}\tld{\psi}$ and $\tld{\psi}_y=\tld{C}\tld{\psi}$, then we have {\allowdisplaybreaks
    \begin{subequations}
      \label{eqns:n-DT-condition}
      \begin{align}
        &\mc{D}(m)_x+\mc{D}(m)B-\tld{B}\mc{D}(m)=0,\label{eqn:n-DT-condition-x}\\
        &\mc{D}(m)_y+\mc{D}(m)C-\tld{C}\mc{D}(m)=0.\label{eqn:n-DT-condition-y}
      \end{align}
    \end{subequations}} Comparing the coefficient of $\Ld^m$ in (\ref{eqn:n-DT-condition-x}), we have
  (\ref{eqn:n-DT-u}). Comparing the coefficient of $\Ld^{-1}$ in (\ref{eqn:n-DT-condition-y}), we have
  (\ref{eqn:n-DT-v}). For arbitrary eigenfunction $w$, its DT $\tld{w}=\mc{D}(m)(w)$ can be expressed in a
  compact form (\ref{eqn:n-DT-src}) according to the Laplace expansion formula. For (\ref{eqn:n-DT-adjsrc}),
  we need induction. Suppose for any adjoint eigenfunction $w^*$, the $m$-time DT formula is correct, then by
  (\ref{eqn:DT-adj-src}), the $m+1$-th DT is
  \begin{align*}
    &w^*[m+1]=-\frac{\Sm(h_{m+1}[m]w^*[m])}{\shft{h_{m+1}[m]}{1}}\\
    &=\frac{\Sm\left[\Dt\left(\frac{h_{m+1}[m-1]}{h_m[m-1]}\Sm(h_m[m-1]w^*[m-1])\right)-
        \shft{h_{m+1}[m-1]}{1} \shft{w^*[m-1]}{1}\right]}{\shft{h_{m+1}[m]}{1}}\\
    &=\frac{\shft{h_{m+1}[m-1]}{1}\Sm(h_m[m-1]w^*[m-1])}{\shft{h_{m+1}[m]}{1}\shft{h_m[m-1]}{1}}
    -\frac{\Sm(h_{m+1}[m-1]w^*[m-1])}{\shft{h_{m+1}[m]}{1}}.
  \end{align*}
  By assumption
  \begin{align*}
    &\frac{\Sm(h_m[m-1]w^*[m-1])}{\shft{h_m[m-1]}{1}}=(-1)^{m+1}\frac{\bcas(h_1,\cdots,h_m,w^*)}
    {\shft{\cas}{1}(h_1,\cdots,h_m)},\\
    &\frac{\Sm(h_{m+1}[m-1]w^*[m-1])}{\shft{h_{m+1}[m-1]}{1}}=(-1)^{m+1}
    \frac{\bcas(h_1,\cdots,h_{m-1},h_{m+1},w^*)}{\shft{\cas}{1}(h_1,\cdots,h_{m-1},h_{m+1})},
  \end{align*}
  we know
  {\allowdisplaybreaks
    \begin{align*}
      &w^*[m+1]=\frac{(-1)^{m+1}}
      {\shft{\cas}{1}(h_1,\cdots,h_{m-1})\shft{\cas}{1}(h_1,\cdots,h_{m+1})}\\
      &\times
      \bigg[\bcas(h_1,\cdots,h_m,w^*)\shft{\cas}{1}(h_1,\cdots,h_{m-1},h_{m+1})\\
      &\quad-\bcas(h_1,\cdots,h_{m-1},h_{m+1},w^*)\shft{\cas}{1}(h_1,\cdots,h_m)\bigg]\\
      =&\frac{(-1)^{m+1}}
      {\shft{\cas}{1}(h_1,\cdots,h_{m-1})\shft{\cas}{1}(h_1,\cdots,h_{m+1})}\\
      &\times
      \begin{vmatrix}
        \Sm(h_1w^*) &\cdots & \Sm(h_{m-1}w^*) & \Sm(h_mw^*) &
        \Sm(h_{m+1}w^*) & 0 &\cdots &0\\
        \shft{h_1}{1} &\cdots &\shft{h_{m-1}}{1} & \shft{h_m}{1} &
        \shft{h_{m+1}}{1} & 0 &\cdots & 0\\
        \vdots & & \vdots &\vdots &\vdots &\vdots & &\vdots\\
        \shft{h_1}{m-1} & \cdots &\shft{h_{m-1}}{m-1} &\shft{h_m}{m-1}
        &\shft{h_{m+1}}{m-1} & 0 &\cdots &0\\
        \shft{h_1}{1} &\cdots &\shft{h_{m-1}}{1} &\shft{h_m}{1}
        &\shft{h_{m+1}}{1} &\shft{h_1}{1} &\cdots &\shft{h_{m-1}}{1}\\
        \vdots & &\vdots &\vdots &\vdots &\vdots & &\vdots\\
        \shft{h_1}{m} &\cdots &\shft{h_{m-1}}{m} &\shft{h_m}{m}
        &\shft{h_{m+1}}{m} &\shft{h_1}{m} &\cdots &\shft{h_{m-1}}{m}
      \end{vmatrix}\\
      =&(-1)^{m+1}\frac{\bcas(h_1,\cdots,h_{m+1},w^*)}{\shft{\cas}{1}(h_1,\cdots,h_{m+1})}.
    \end{align*}}

  Next we want to prove the $m$-th repetition formula for new source terms. Firstly, it is easy to see
  $w_{N+j}[m]=c_j(y)f_j[m]$ can be derived from (\ref{eqn:n-DT-src-new}). For $w_{N+j}^*[m]$, we use
  induction. Suppose when $j\le m$ the formula for $w_{N+j}^*[m]$ is given by (\ref{eqn:n-DT-adjsrc-new}), then
  \begin{displaymath}
    w_{N+j}^*[m+1]=-\frac{\Sm(h_{m+1}[m]w_{N+j}^*[m])}{\shft{h_{m+1}[m]}{1}}.
  \end{displaymath}
  Because $h_{m+1}[m]$ is obtained by $m$-time repetition of DT by using $h_1\cdots,h_m$ sequentially, which is
  equivalent to $m$-time repetition of DT by successively using $h_1,\cdots,h_{j-1},h_{j+1},\cdots,h_{m}$ and
  at last $h_j$,
  \begin{displaymath}
    h_{m+1}[m]=\shft{h_{m+1}[m-1]}{1}-\frac{\shft{h_j[m-1]}{1}}{h_j[m-1]}h_{m+1}[m-1].
  \end{displaymath}
  Note that (\ref{eqn:n-DT-adjsrc-new}), we have $w_{N+j}^*[m]=\frac{d_j(y)}{\shft{h_j[m-1]}{1}}$, so
  \begin{align*}
    w_{N+j}^*[m+1]
    &=-\frac{d_j(y)}{\shft{h_{m+1}[m]}{1}}\Sm\Dt\left(\frac{h_{m+1}[m-1]}{h_j[m-1]}\right)\\
    &=(-1)^{m+1-j}\frac{d_j(y)\shft{\cas}{1}(h_1,\cdots,\hat{h}_j,\cdots,h_{m+1})}
    {\shft{\cas}{1}(h_1,\cdots,h_{m+1})}.
  \end{align*}
  When $j=m+1$,
  \begin{displaymath}
    w_{N+m+1}^*[m+1]=\frac{d_{m+1}(y)}{\shft{h_{m+1}[m]}{1}}
    =\frac{d_{m+1}(y)\shft{\cas}{1}(h_1,\cdots,h_m)}{\shft{\cas}{1}(h_1,\cdots,h_{m+1})}.
  \end{displaymath}
\end{proof}

\section{Solutions for 2dTLSCS}
\label{sec:solutions}

Let us start from trivial solution $q=1$, $v=1$, $u=0$, $N=0$ for 2dTLSCS (\ref{eqns:Lax-2TodaEqScs}). The Lax
pair reads
\begin{subequations}
  \label{eqns:trivial-lax}
  \begin{align}
    \psi_x&=\shft{\psi}{1},\\
    \psi_y&=\shft{\psi}{-1}.
  \end{align}
\end{subequations}

\subsection{Solitons}
\label{subsec:2toda-soliton}

Equations (\ref{eqns:trivial-lax}) have two linearly independent solutions
\begin{displaymath}
  f(n,x,y)=\exp(n\om+z x +z^{-1}y),\quad
  g(n,x,y)=\exp(-n\om+ z^{-1}x + zy),
\end{displaymath}
where $z=e^\om$. Let $a(y)=e^{\af(y)}$, then
\begin{displaymath}
  h=f+a(y) g=2\exp\Om\cdot\cosh Z,
\end{displaymath}
here
\begin{displaymath}
  \Om=\cosh\om\cdot x + \cosh\om\cdot y + \af/2, \quad
  Z=n\om+\sinh\om\cdot x -\sinh\om\cdot y-\af/2.
\end{displaymath}
By (\ref{eqns:n-DT}), taking $m=1$ we have the following 1-soliton solution for (\ref{eqns:2dTLSCS-1})
\begin{align*}
  u[1]&=\frac{\cosh(Z+2\om)}{\cosh(Z+\om)}-\frac{\cosh(Z+\om)}{\cosh
    Z},\\
  v[1]&=\frac{\cosh(Z+\om)\cosh(Z-\om)}{\cosh^2 Z},\\
  w[1]&=c(y)\frac{\sinh\om\cdot e^\Om}{\cosh Z},\\
  w^*[1]&=\frac{d(y)e^{-\Om}}{2\cosh(Z+\om)},
\end{align*}
where $c(y)d(y)=\dot\af$.

If take two pairs of independent solutions, with respect to $z_j=e^{\om_j}$ ($j=1,2$), i.e.
\begin{displaymath}
  f_j=\exp(n\om_j+z_jx+z_j^{-1}y),\quad
  g_j=\exp(-n\om_j+z_j^{-1}x+z_jy) \quad j=1,2.
\end{displaymath}
Let $a_j(y)=e^{\af_j(y)}$, then
\begin{displaymath}
  h_j=f_j+a_jg_j=2\exp\Om_j\cdot\cosh Z_j,
\end{displaymath}
where
\begin{displaymath}
  \Om_j=\cosh\om_j\cdot x+\cosh\om_j\cdot y+\af_j/2,\quad
  Z_j=n\om_j+\sinh\om_j\cdot x-\sinh\om_j\cdot y-\af_j/2.
\end{displaymath}
To simplify the notion, for $k\in\mb{Z}$, define
\begin{align*}
  H_k=&\left|
  \begin{matrix}
    \cosh Z_1 & \cosh Z_2\\
    \cosh (Z_1+k\om_1) & \cosh (Z_2+k\om_2)
  \end{matrix}
  \right|\\
  =&\sinh\frac{k(\om_1-\om_2)}{2}\sinh\left(Z_1+Z_2+\frac{k}{2}(\om_1+\om_2)\right)\\
  &+\sinh\frac{k(\om_1+\om_2)}{2}\sinh\left(Z_1-Z_2+\frac{k}{2}(\om_1-\om_2)\right).
\end{align*}
Then 2-soliton solution for (\ref{eqns:2dTLSCS-1}) is
\begin{align*}
  u[2]&=\frac{\shft{H_2}{1}}{\shft{H_1}{1}}-\frac{H_2}{H_1},\\
  v[2]&=\frac{\shft{H_1}{1}\shft{H_1}{-1}}{H_1^2},\\
  w_1[2]&=c_1(y)a_1(y)\frac{2\sinh\om_1(\cosh\om_1-\cosh\om_2)\exp\Om_1\cosh(Z_2+\om_2)}{H_1},\\
  w_2[2]&=c_2(y)a_2(y)\frac{2\sinh\om_2(\cosh\om_1-\cosh\om_2)\exp\Om_2\cosh(Z_1+\om_1)}{H_1},\\
  w^*_1[2]&=-\frac{d_1(y)\exp(-\Om_1)\cosh(Z_2+\om_2)}{\shft{H_1}{1}},\\
  w^*_2[2]&=\frac{d_2(y)\exp(-\Om_2)\cosh(Z_1+\om_1)}{\shft{H_1}{1}},
\end{align*}
where $c_j(y)d_j(y)=\dot\af_j$.

\subsection{Rational solution}
In equation (\ref{eqns:trivial-lax}), noticing that $\p_z^k\psi$ is another solution. Since
$g(n,x,y)=z^n\exp(zx+z^{-1}y)$ and $f_k(n,x,y):=\p_z^kg$~($k\ge 1$) are all independent solutions for
(\ref{eqns:trivial-lax}). Let $\xi:=zx+z^{-1}y$, then
\begin{align*}
  f_1(n,x,y)&=\p_zg=z^{n-1}e^\xi(n+z\xi_z),\\
  f_2(n,x,y)&=\p_z^2g=z^{n-2}e^\xi(n^2+2nz\xi_z-n+z^2\xi_z^2+z^2\xi_{zz}),\\
  f_3(n,x,y)&=\cdots
\end{align*}
Let $h_k=f_k+a(y)g$. Take $k=1$, $m=1$ in (\ref{eqns:n-DT}), one yields
\begin{align*}
  u[1]&=-\frac{z}{(\eta+za+1/2)^2-1/4},\\
  v[1]&=1-\frac{1}{(\eta+za)^2},\\
  w[1]&=c(y)\frac{z^{n+1}e^\xi a}{\eta+za},\\
  w^*[1]&=d(y)\frac{z^{-n}e^{-\xi}}{\eta+za+1},
\end{align*}
where $\eta=n+z\xi_z$,$c(y)d(y)=\frac{\dr}{\dr y}\log a$. This is a rational solution for
(\ref{eqns:2dTLSCS-1}).

If take $k=2,m=1$, we find another rational solution for (\ref{eqns:2dTLSCS-1})
\begin{align*}
  u[1]&:=z\left(\frac{\shft{\eta}{2}+z^2a}{\shft{\eta}{1}+z^2a}-
    \frac{\shft{\eta}{1}+z^2a}{\eta+z^2a}\right),\\
  v[1]&:=\frac{(\shft{\eta}{1}+z^2a)(\shft{\eta}{-1}+z^2a)}{(\eta+z^2a)^2},\\
  w[1]&:=2c(y)az^{n+1}e^\xi\frac{n+z\xi_z}{\eta+z^2a},\\
  w^*[1]&:=d(y)\frac{z^{-n+1}e^{-\xi}}{\shft{\eta}{1}+z^2a},
\end{align*}
where $\eta=n^2+2nz\xi_z-n+z^2\xi_z^2+z^2\xi_{zz}$, $c(y)d(y)=\frac{\dr}{\dr y}\log a$.

\subsection{Other solutions}
Let
\begin{displaymath}
  f=z^ne^{zx+z^{-1}y}:=z^ne^{F(x,y,z)},\quad
  g=z^{-n}e^{z^{-1}x+zy}:=z^{-n}e^{G(x,y,z)},
\end{displaymath}
be pair of solutions to (\ref{eqns:trivial-lax}), then $f_z$ and $g_z$ are another pair of solutions to
(\ref{eqns:trivial-lax}). Let
\begin{displaymath}
  h=f+a(y)g=2\exp\Om\cdot\cosh Z,
\end{displaymath}
where $\Om$ and  $Z$ are defined in subsection \ref{subsec:2toda-soliton}. Then
\begin{displaymath}
  h_z=f_z+a(y)g_z=2\Om_ze^\Om\cosh Z+2e^\Om Z_z\sinh Z.
\end{displaymath}
From (\ref{eqns:n-DT}), taking $m=2$, we construct solutions with singularities. For simplicity, we define
\begin{displaymath}
  \begin{vmatrix}
    h & h_z\\ \shft{h}{k} & \shft{h_z}{k}
  \end{vmatrix}
  =4e^{2\Om}C_k,\quad\text{where} C_k=(Z_z+\frac{k}{2z})\sinh(k\om)+\frac{k}{2z}\sinh(2Z+k\om).
\end{displaymath}
\begin{align*}
  \cas(h,h_z,f)&=-8a(x)e^{3\Om}\frac{\sinh^2\om}{z}\cosh(Z+\om),\\
  \cas(h,h_z,f_z)&=4a(x)e^{3\Om}\left(\shft{D_1}{1}\cosh Z +D_1\cosh(Z+2\om)-D_2\cosh(Z+\om)\right),
\end{align*}
where
\begin{displaymath}
  D_k=-(\frac{n}{z}+\frac{k}{2z}+F_z)(\frac{n}{z}+\frac{k}{2z}-G_z)\sinh(k\om)
  +\frac{k^2}{4z^2}\sinh(k\om)+\frac{k\Om_z}{z}\cosh(k\om),\;k=1,2.
\end{displaymath}
then find the following solution for (\ref{eqns:2dTLSCS-1})
\begin{align*}
  u[2]&=\frac{\shft{C_2}{1}}{\shft{C_1}{1}}-\frac{C_2}{C_1},\quad
  v[2]=\frac{\shft{C_1}{1}\shft{C_1}{-1}}{C_1^2},\\
  w_1[2]&=-2c_1(y)a(y)\frac{\sinh^2\om e^\Om}{z C_1}\cosh(Z+\om),\\
  w_2[2]&=c_2(y)a(y)\frac{e^\Om\left(\shft{D_1}{1}\cosh Z
      +D_1\cosh(Z+2\om)-D_2\cosh(Z+\om)\right)}{C_1},\\
  w_1^*[2]&=-d_1(y)\frac{\Om_z\cosh(Z+\om)+(Z_z+1/z)\sinh(Z+\om)}{2e^\Om
    \shft{C_1}{1}},\\
  w_2^*[2]&=d_2(y)\frac{\cosh(Z+\Om)}{2e^\Om \shft{C_1}{1}}.
\end{align*}
where $c_j(y)d_j(y)=\frac{\dr}{\dr y}\log a(y)$.

\section*{Conclusion}
We present a new multi-component two dimensional Toda lattice hierarchy, which enables us to find the two
dimensional Toda lattice equation with self-consistent sources in different way from
\cite{MR708435,MR910584,MR940618,MR2261273,2-toda-scs-casorati} as well as their Lax representations. Since
two dimensional Toda lattice equation with self-consistent sources can be considered as a two dimensional Toda
lattice equation with non-homogeneous terms, method of variation of constant can be applied to the ordinary
Darboux transformations for 2dTLSCS to construct a non-auto-B\"acklund Darboux transformations. Then it offers
a different way to solve 2dTLSCS in contrast with \cite{MR2261273,2-toda-scs-casorati}.

The 2dTLH offers various types of reductions, for example periodic reductions and reductions to Toda lattice
equation. It is an interesting question does this new mc2dTLH offers similar reductions. We may discuss such
problems elsewhere.

\section*{Acknowledgement}
The author thanks professor Hu Xingbiao for valuable comments. This work was supported by National Basic
Research Program of China (973 Program) (2007CB814800) and National Natural Science Foundation of China (grand
No. 10601028).

%\bibliographystyle{unsrt}
%\bibliography{ref}

\begin{thebibliography}{10}

\bibitem{MR638807}
Etsur{\=o} Date, Michio Jimbo, Masaki Kashiwara, and Tetsuji Miwa.
\newblock Transformation groups for soliton equations. {III}. {O}perator
  approach to the {K}adomtsev-{P}etviashvili equation.
\newblock {\em J. Phys. Soc. Japan}, 50(11):3806--3812, 1981.

\bibitem{MR723457}
Michio Jimbo and Tetsuji Miwa.
\newblock Solitons and infinite-dimensional {L}ie algebras.
\newblock {\em Publ. Res. Inst. Math. Sci.}, 19(3):943--1001, 1983.

\bibitem{MR730247}
Mikio Sato and Yasuko Sato.
\newblock Soliton equations as dynamical systems on infinite-dimensional
  {G}rassmann manifold.
\newblock In {\em Nonlinear partial differential equations in applied science
  (Tokyo, 1982)}, volume~81 of {\em North-Holland Math. Stud.}, pages 259--271.
  North-Holland, Amsterdam, 1983.

\bibitem{MR688946}
Etsur{\=o} Date, Michio Jimbo, Masaki Kashiwara, and Tetsuji Miwa.
\newblock Transformation groups for soliton equations. {E}uclidean {L}ie
  algebras and reduction of the {KP} hierarchy.
\newblock {\em Publ. Res. Inst. Math. Sci.}, 18(3):1077--1110, 1982.

\bibitem{MR2006751}
V.~G. Kac and J.~W. van~de Leur.
\newblock The {$n$}-component {KP} hierarchy and representation theory.
\newblock {\em J. Math. Phys.}, 44(8):3245--3293, 2003.
\newblock Integrability, topological solitons and beyond.

\bibitem{MR1621464}
Johan van~de Leur.
\newblock Schlesinger-{B}\"acklund transformations for the {$N$}-component
  {KP}.
\newblock {\em J. Math. Phys.}, 39(5):2833--2847, 1998.

\bibitem{MR1629543}
H.~Aratyn, E.~Nissimov, and S.~Pacheva.
\newblock A new ``dual'' symmetry structure of the {KP} hierarchy.
\newblock {\em Phys. Lett. A}, 244(4):245--255, 1998.

\bibitem{MR810623}
Kimio Ueno and Kanehisa Takasaki.
\newblock Toda lattice hierarchy.
\newblock In {\em Group representations and systems of differential equations
  (Tokyo, 1982)}, volume~4 of {\em Adv. Stud. Pure Math.}, pages 1--95.
  North-Holland, Amsterdam, 1984.

\bibitem{MR708435}
V.~K. Mel{\cprime}nikov.
\newblock On equations for wave interactions.
\newblock {\em Lett. Math. Phys.}, 7(2):129--136, 1983.

\bibitem{MR910584}
V.~K. Mel{\cprime}nikov.
\newblock A direct method for deriving a multisoliton solution for the problem
  of interaction of waves on the {$x,y$} plane.
\newblock {\em Comm. Math. Phys.}, 112(4):639--652, 1987.

\bibitem{MR940618}
V.~K. Mel{\cprime}nikov.
\newblock Exact solutions of the {K}orteweg-de {V}ries equation with a
  self-consistent source.
\newblock {\em Phys. Lett. A}, 128(9):488--492, 1988.

\bibitem{MR2261273}
Xing-Biao Hu and Hong-Yan Wang.
\newblock Construction of d{KP} and {BKP} equations with self-consistent
  sources.
\newblock {\em Inverse Problems}, 22(5):1903--1920, 2006.

\bibitem{nmcKP}
Xiaojun Liu, Yunbo Zeng, and Runliang Lin.
\newblock A {N}ew {M}ulti-component {KP} {H}ierarchy.
\newblock In submission.

\bibitem{2-toda-scs-casorati}
Hong-Yan Wang, Xing-Biao Hu, and Gegenhasi.
\newblock 2d toda lattice equation with self-consistent sources: Casoratian
  type solutions, bilinear b\"{a}cklund transformation and lax pair.
\newblock {\em J. Comp. Appl. Math.}, 202(1):133--143, 2007.

\bibitem{MR1964513}
L.~A. Dickey.
\newblock {\em Soliton equations and {H}amiltonian systems}, volume~26 of {\em
  Advanced Series in Mathematical Physics}.
\newblock World Scientific Publishing Co. Inc., River Edge, NJ, second edition,
  2003.

\bibitem{bk:MatveevSalle91}
V.~B. Matveev and M.~A. Salle.
\newblock {\em Darboux transformations and solitons}.
\newblock Springer Series in Nonlinear Dynamics. Springer-Verlag, Berlin, 1991.

\end{thebibliography}

\def\cprime{$'$} \def\cprime{$'$} \def\cprime{$'$} \def\cprime{$'$}
  \def\cprime{$'$}

\end{document}